\newcommand{\blk}{\color{black}}
\theoremstyle{definition}
\newtheorem{assumption}{Assumption}
\theoremstyle{plain}
\newtheorem{theorem}{Theorem}[section]
\newtheorem{proposition}[theorem]{Proposition}
\newtheorem{corollary}[theorem]{Corollary}
\theoremstyle{remark}
\newtheorem{remark}[theorem]{Remark}
\newcommand{\indicator}{\mathbbm{1}}
\newcommand{\R}{\mathbb{R}}
\newcommand{\W}{\mathbb{W}}
\DeclareMathOperator*{\argmax}{arg\,max}
\numberwithin{equation}{section}
\begin{document}
\title{A Maximum Principle approach to deterministic Mean Field Games of Control with Absorption
\thanks{This work was supported through the profile partnership program between  Humboldt-Universit\"at zu Berlin and Princeton University.}}
\author{Paulwin Graewe\footnote{Deloitte Consulting GmbH, Kurf\"urstendamm 23, 10719 Berlin, Germany; email: pgraewe@deloitte.de} 
	\and 
	Ulrich Horst\footnote{Department of Mathematics, and School of Business and Economics, Humboldt-Universit\"at zu Berlin,
         Unter den Linden 6, 10099 Berlin, Germany; email: horst@math.hu-berlin.de} 
         \and 
         Ronnie Sircar\footnote{Department of Operations Research and Financial Engineering, Princeton University, Sherrerd Hall, Princeton, NJ 08544; email: sircar@princeton.edu}}

\maketitle

\begin{abstract}
We study a class of deterministic mean field games on finite and infinite 
time horizons arising in models of optimal exploitation of exhaustible resources. The main characteristic of our game is an absorption constraint on the players' state process. As a result of the state constraint the optimal time of absorption becomes part of the equilibrium. This requires a 
novel approach when applying Pontyagin's maximum principle. We prove the existence and uniqueness of equilibria and solve the infinite horizon models in closed form. As players may drop out of the game over time, equilibrium production rates need not be monotone nor smooth. 
\end{abstract}

{\bf AMS Subject Classification:} 93E20, 91B70, 60H30.

{\bf Keywords:}{~stochastic control, mean field game, optimal exploitation, maximum principle}

\renewcommand{\baselinestretch}{1.05} \normalsize


\section{Introduction}

This paper establishes existence and uniqueness of equilibrium results for a class deterministic mean field games (MFGs) arising in models of optimal exploitation of exhaustible resources. MFGs provide a convenient tool 
for analyzing complex strategic interactions among many players when each 
individual player has only a small impact on the behavior of other players. In a standard mean field game each player solves a control problem in which an individual player's payoff functional and the dynamics of the controlled state process depend on the empirical distribution of the other players' actions or states. The existence of Nash equilibria in MFGs can be established by solving either a coupled system of {two} partial differential equations (PDEs), a backward Hamilton-Jacobi-Bellman equation determining the players' utility, and a forward Kolmogorov equation determining the evolution of the distribution of states or actions, or by solving a system of McKean-Vlasov forward-backward SDEs where the 
forward component describes the state dynamics and the backward component 
describes the dynamics of the adjoint variable. We refer to the monograph 
\cite{BensoussanFrehseBook} for further background.

In the economics literature MFGs are often called anonymous games. First introduced by Rosenthal \cite{R-1973} and Jovanovic and Rosenthal \cite{JR-1988}, anonymous games have received renewed attention among economists 
in the last  two decades; see \cite{B-1999,DP-2015, H-2005} and references therein. Huang, Malham\'e and Caines \cite{HuangCainesMalhame06} and Lasry and Lions \cite{LasryLions07} independently introduced MFGs into the engineering and mathematical literature. Ever since, MFGs have become an important driver of mathematical innovation, especially in the areas of PDEs and backward stochastic equations. Complementing the theoretical work 
on MFGs, there is a by now substantial literature where anonymous and mean field games have been successfully applied to an array economic and engineering problems, ranging from network security and traffic networks \cite{GT, WYTH} and systemic risk management \cite{CFS-2015}, to portfolio liquidation \cite{ET-2020, FGHP-2018, FHX-2020,HJN-2015} and oil and energy production in competitive markets \cite{chan2014bertrand,ChanSircar15,graber,graber2020mean,gueant}. 

Oligopoly models of markets with a small number of competitive players that compete on the amount of output they produce go back to the classical work of Cournot~\cite{cournot} in 1838. These have typically been static (or one-period) models, where the existence and construction of a Nash equilibrium have been extensively studied. Dynamic Cournot models for energy production in a competitive market have been proposed and analyzed by many authors in recent years; we refer to \cite{LS-survey} for a survey on 
energy production models. In these models market participants are often endowed with limited amounts of exhaustible resources such as oil, coal or 
natural gas that they choose to extract for sale. The models may lead to continuous time single player control problems \cite{fbs}, finite player nonzero-sum differential games \cite{HHS}, or continuum mean field games \cite{chan2014bertrand,ChanSircar15}. In the context of nonzero-sum dynamic games between finitely many players the computation of a solution is a 
challenging problem, typically involving coupled systems of nonlinear PDEs, with one value function per player, and existence theory is sparse. MFGs allow one to handle certain types of competition in the continuum limit of an infinity of small players by solving either systems of PDEs or forward-backward SDEs.

One of the key characteristics of games with exhaustible resources is that resources may run out and change the structure of the game. In a stochastic setting, the strict absorption constraint is often ignored, assuming 
that resource levels fluctuate randomly and recover as soon as they reach 
positive levels again. The situation is very similar in models of optimal 
portfolio liquidation. When asset portfolios are liquidated on behalf of third parties, short positions are not allowed legally. Nonetheless, these constraints are usually ignored for reasons of mathematical tractability. The non-negativity constraint may be binding even in deterministic games as shown in \cite{FGHP-2018}. 

Incorporating non-negativity constraints on the state processes leads to MFGs with absorbing boundaries. The literature on MFGs with absorbing state dynamics is sparse. Models of mean field type without strategic interactions have been applied to model interactive default times by various authors. Graber and Bensoussan \cite{graber} consider a modification of the 
model of Chan and Sircar \cite{chan2014bertrand} on a bounded domain. While the restriction to bounded domains simplifies the mathematical analysis it seems undesirable from an economic perspective. We impose only non-negativity constraints but restrict ourselves to deterministic settings. MFGs with absorbing state constraints in stochastic settings have recently 
been considered by Campi and Fischer in \cite{campi2018}. Their model heavily relies on diffusive state dynamics, though.

We consider a deterministic MFG model of optimal exploitation of exhaustible resources in which players with zero resources drop out of the game before the terminal time in equilibrium. Both finite and the infinite horizon models are studied. As we will see there can be very different equilibrium strategies depending on whether a terminal time is imposed or not. The fact players may exit the game before the terminal time requires a novel approach to the maximum principle associated with an individual player's optimization problem. Due to the non-negativity constraint on the state process, the Hamiltonian depends on the initial condition, while the terminal value of the adjoint process is unknown. The latter is very similar to portfolio liquidation models where the terminal condition of the adjoint process is unknown, due to the liquidation constraint; see \cite{FGHP-2018,FH-2020} for details. In order to overcome this problem we consider the Hamiltonian only up a candidate optimal exploitation time. Obtaining such candidate requires a candidate terminal condition for the adjoint 
process. In our deterministic setting the candidate terminal condition and hence the candidate optimal exploitation time can be given in closed form; the stochastic case is much more involved and left for future work. 

The key observation in solving the MFG is that a certain best response function is independent of the mean field equilibrium. More precisely, we prove that the dynamics of the amount of resource extracted by an individual player up to any {\sl given} time by using a strategy that would be optimal if the optimal exhaustion time was equal to that given time is independent of the {\sl equilibrium} mean production rate. Moreover, we show that the dynamics follows an ODE that depends only the initial distribution of the resource levels and the risk free interest rate. The ODE can be 
solved in closed form. The solution is strictly monotone and the optimal exhaustion time is given by the inverse function. In the infinite horizon case where the equilibrium mean production rate converges to zero as time increases to infinity, this allows us to fully describe the unique 
MFG equilibrium in terms of the said ODE. In the finite horizon case the equilibrium mean production rate at the terminal time is unknown. This results in an additional fixed point problem that can easily be solved. Thus we provide among the very few explicit solutions to MFGs outside the linear-quadratic framework.    

To illustrate our main ideas we first consider in Section \ref{sec:mon} the benchmark case of a single monopolist oil producer. We fully characterize optimal exploitation strategies. In particular, we prove that full exploitation may not be optimal in finite horizon problems. The MFG of optimal extraction is analyzed in Section \ref{sec:game}. We first determine the equilibrium time of exploitation as a function of the competitors' strategies and own initial resources. Subsequently, we determine the equilibrium exploitation times and strategies. We illustrate by two simple examples that equilibrium exploitation rates do need not be monotone, nor smooth.    


\section{The Monopoly Case} \label{sec:mon}

		As a motivation for our general analysis we illustrate in this section our main ideas in the framework of monopolist oil producer. We fix a time 
horizon $T \in (0,\infty]$ and denote the monopolist's resource at time $0$ by $x_0$. The monopolist extracts the resource according to a measurable rate function (control) $q:[0,T] \to [0,\infty)$. A control is called {\em admissible} if the corresponding state process 
\[
	X_t^q :=x_0-\int_0^t q_u\,du
\]
is non-negative on $[0,T]$. Following \cite{chan2014bertrand,ChanSircar15}, we assume that the price function $p:[0,T] \to [0,1]$ when the production rate $q$ is employed is given by $p = 1-q$. For a given constant discount rate $r>0$, the monopolist's value function is then defined by
\begin{equation} \label{value-function-monopoly}
	u(x_0) :=\sup_{q \geq 0 } J(q),
\end{equation}
where 
\[
	J(q) := \int_0^{\tau_q\wedge T}e^{-rt}q_t(1-q_t)\,dt
\]
is the monopolist's  discounted revenue, and
\[
	\tau_q=\tau_q(x_0):=\inf\left\{t\in[0,T]: X^q_t = 0\right\}
\]
denotes the time of depletion of the resource under the control $q$. We put 
\[
	\tau_q = + \infty \quad \mbox{if} \quad \tau_q > T. 
\]

Due to the dependence of $\tau_q$ on the control, the Hamiltonian depends 
on the initial state. This calls for a non-standard approach when applying the maximum principle. We can rewrite the value function as
\begin{equation} \label{value-function-char-func}
	u(x_0)=\sup_{q \geq 0}\int_0^{T}e^{-rt}q_t(1-q_t)\indicator_{\{X^q_t>0\}}\,dt,
\end{equation}
and so the Hamiltonian is given by
\begin{equation*}
	H(x,q,y)=q(1-q)\indicator_{\{x>0\}}-qy. \label{ham1}
\end{equation*}
The maximum principle formally states that if $q^*$ is an optimal control 
and $X^*$ the induced state process, then there exists an {\sl adjoint process} $Y$, to be interpreted as $\dot u(X^*_t)$, that follows the dynamics 
\begin{equation*}
\begin{split}
	-\dot Y_t &= -r Y_t+ \text{``}\partial_x H(X_t^*,q_t^*,Y_t)\text{''}, \quad 0\leq t<T \\
	Y_T & =\text{``}\dot u(X_T^*)\text{''},
\end{split}
\end{equation*}
and that satisfies the maximum condition
\begin{equation} \label{optimal-q}
q^*_t = \argmax_{\bar q \in \mathbb R_+} H(X_t^*,\bar q,Y_t)=\frac{1}{2}\left(1-Y_t \right)^+.
\end{equation}

However, $\partial_x H(x_0,q,y)=0$ for $x_0\neq 0$, while $\partial_x H(0,q,y)$ does not exist. Moreover, the terminal condition of $Y$ is not well defined; it will turn out that the derivative $\dot u(X_T^*)$ does not exist. The way to overcome this problem is to apply the maximum principle only up to the time $\tau^*:=\tau_{q^*}$ where $q^*$ is the candidate optimal strategy. In order to obtain the candidate optimal strategy we will determine the value $Y_{\tau^*}$ of the adjoint process at the time of depletion and then define the adjoint process {\sl on the entire time interval} as
\begin{equation}\label{adjoint}
	Y_t= Y_{\tau^*}e^{-r(\tau^*-t)} \quad \mbox{for} \quad t \in [0,T].
\end{equation} 

We need to distinguish three different cases, depending on whether full exploitation occurs strictly before time $T$, exactly at the terminal time 
or whether full exploitation is not optimal. In what follows, we derive a 
characterization of the candidates for the optimal exploitation time and terminal condition of the adjoint process in terms of the initial resource level.

\begin{itemize}
	\item {$\tau^*< T\leq\infty$.} In this case the maximum condition \eqref{optimal-q} suggests that $Y_{\tau^*}=1$ to ensure that $q^*\equiv0$ in 
$[\tau^*,T]$ and hence that $Y_t=e^{-r(\tau^*-t)}\indicator_{\{t\leq\tau^*\}}$ and
\[
	q^*_t(x_0) = \frac{1}{2} \left( 1- e^{-r(\tau^*-t)}\indicator_{\{t\leq\tau^*\}} \right). 
\]
Then $\tau^*$ is determined by the identity 
\[
	x_0=\int_0^{\tau^*}q_t^*\,dt=\frac{1}{2}\int_0^{\tau^*} (1- e^{-r(\tau^*-t)})\,dt=\frac{\tau^*}{2}-\frac{1}{2r}(1-e^{-r\tau^*}).
\]
This gives
\begin{equation} \label{optimal-tau-monopoly}
	\tau^*(x_0)=2x_0+\frac{1+\W(-e^{-1-2rx_0})}{r},
\end{equation}
where $\W$ denotes the principal branch of the Lambert-W function, defined as the inverse function of $xe^x$, restricted to the range $[-1,\infty)$ and the domain $[-e^{-1},\infty)$. This recovers the formula found by dynamic programming methods for the $T=\infty$ case in \cite[Proposition 
3]{chan2014bertrand}.

	\item{{\bf $\tau^*= T<\infty$.}} In this case we expect the terminal value $Y_T(x_0)$ to be obtained by the identity
\[
	x_0=\int_0^T q_t^*\,dt=\frac{1}{2}\int_0^T (1-Y_T(x_0) e^{-r(T-t)})\,dt=\frac{T}{2}-\frac{Y_T(x_0)}{2r}(1-e^{-rT}),
\]
which yields 
\begin{equation} \label{optimal-hitting-monopoly}
	Y_T(x_0)=\frac{r(T-2x_0)}{1-e^{-rT}}, \quad\mbox{ and }\quad q_t^*(x_0) = \frac12\left(1 - \frac{r(T-2x_0)}{1-e^{-rT}}e^{-r(T-t)}\right).
\end{equation}
Note that $q^*_T>0$ in general, and that $Y\geq0$ only if $x_0\leq T/2$. This suggests that full exploitation is not optimal if $x_0 > T/2$.   

\item{{\bf $\tau^*=+\infty$.}} In this case, we expect the problem to be locally independent of $x_0$, and hence that $Y\equiv0$. This yields $q^*\equiv\frac{1}{2}$ and therefore necessarily $x_0\geq T/2$ and $T<\infty$. 
\end{itemize}

	Using the above obtained terminal values $Y_T(x_0)$ we now {define} the adjoint process $Y_t$ by \eqref{adjoint} for every $t \in [0,T]$. By construction $Y_t \geq 0$ for all $t \in [0,T]$ and $Y_t \geq 1$ for $t \geq \tau^*$. Moreover, $Y_t \equiv 0$ if $\tau^* > T$ or $x_0=T/2$. In particular, for any admissible control $q$, the Hamiltonian satisfies 
\begin{equation}\label{rem-H}
	H(X^q_t,q_t,Y_t) \leq 0 \quad \mbox{on} \quad [\tau^*,T].
\end{equation}

We can now prove that the above constructed candidate optimal strategy is 
indeed optimal.

\begin{theorem}[Verification Theorem] \label{verification-monopoly}
The optimal control at time $t\in[0,T]$ to the control problem \eqref{value-function-char-func} with initial state $x_0>0$ is given by
\[
q^*_t(x_0)=\frac{1}{2}\begin{cases}(1-e^{-r(\tau^*(x_0)-t)})\indicator_{\{t\leq \tau^*(x_0)\}}, &0\leq x_0\leq \frac{T}{2}-\frac{1}{2r}(1-e^{-rT});\\
1-Y_T(x_0)e^{-r(T-t)}, & \frac{T}{2}-\frac{1}{2r}(1-e^{-rT})\leq x_0\leq T/2;\\
1, & x_0\geq T/2;
\end{cases}
\]
where $\tau^*(x_0)$ and $Y_T(x_0)$ are given by \eqref{optimal-tau-monopoly} and \eqref{optimal-hitting-monopoly}, respectively.
\end{theorem}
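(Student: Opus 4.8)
The plan is to run a Pontryagin-type verification: compare the candidate $q^*$ against an arbitrary admissible control $q$ with induced state $X^q$, and show $J(q^*)\ge J(q)$ for the functional in \eqref{value-function-char-func}. First I would record that $q^*$ is admissible. In each of the three regimes one checks $Y_t\le 1$ on the relevant range of $t$ — this is exactly where the case thresholds and the requirement $x_0\le T/2$ enter (e.g. in the middle case $Y_T(x_0)\le1$ is equivalent to $x_0\ge \tfrac{T}{2}-\tfrac{1}{2r}(1-e^{-rT})$) — so that $q^*_t=\tfrac12(1-Y_t)^+\ge0$ is genuinely non-negative, and by the very definition of $\tau^*(x_0)$ (resp.\ $Y_T(x_0)$) the induced state $X^*$ stays non-negative and is depleted precisely at $\tau^*$.

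The core is a Hamiltonian identity. Writing $q(1-q)\indicator_{\{x>0\}}=H(x,q,y)+qy$, I would split
\[
J(q^*)-J(q)=\int_0^T e^{-rt}\big[H(X^*_t,q^*_t,Y_t)-H(X^q_t,q_t,Y_t)\big]\,dt+\int_0^T e^{-rt}Y_t\,(q^*_t-q_t)\,dt.
\]
The second integral is handled by integration by parts: since $q^*_t-q_t=\frac{d}{dt}(X^q_t-X^*_t)$ and, crucially, the adjoint was defined in \eqref{adjoint} precisely so that $e^{-rt}Y_t\equiv Y_{\tau^*}e^{-r\tau^*}$ is \emph{constant} on all of $[0,T]$, the interior term vanishes identically; using $X^q_0=X^*_0=x_0$, only the boundary term at $T$ survives, giving $\int_0^T e^{-rt}Y_t(q^*_t-q_t)\,dt=e^{-rT}Y_T\,(X^q_T-X^*_T)$ (read as a limit when $T=\infty$). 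This term is $\ge0$ in every case: in the two depletion regimes $X^*_T=0$, so $X^q_T-X^*_T=X^q_T\ge0$ while $Y_T\ge0$, and in the no-depletion regime $Y\equiv0$.

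It then remains to show the integrand $H(X^*_t,q^*_t,Y_t)-H(X^q_t,q_t,Y_t)\ge0$ for a.e.\ $t$, which I would do by splitting at $\tau^*$. For $t<\tau^*$ we have $X^*_t>0$ and $Y_t<1$, so by \eqref{optimal-q} $q^*_t$ maximizes $q\mapsto q(1-q)-qY_t$ over $q\ge0$, whence $H(X^*_t,q^*_t,Y_t)=\tfrac14(1-Y_t)^2\ge0$ and $H(X^*_t,q^*_t,Y_t)\ge H(X^*_t,q_t,Y_t)$; comparing with $H(X^q_t,q_t,Y_t)$ one checks termwise on $\indicator_{\{X^q_t>0\}}\in\{0,1\}$ that replacing $\indicator_{\{X^*_t>0\}}=1$ by $\indicator_{\{X^q_t>0\}}$ can only lower the value (if $X^q_t=0$ the Hamiltonian reduces to $-q_tY_t\le0\le H(X^*_t,q^*_t,Y_t)$). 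For $t\ge\tau^*$ we have $q^*_t=0$ and $H(X^*_t,q^*_t,Y_t)=0$, so it suffices that $H(X^q_t,q_t,Y_t)\le0$; this is exactly \eqref{rem-H}, which follows from $Y_t\ge1$ there, since $q_t(1-q_t-Y_t)\le-q_t^2\le0$ when $X^q_t>0$ and $-q_tY_t\le0$ otherwise. Adding the two non-negative contributions yields $J(q^*)\ge J(q)$.

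The main obstacle, and the reason the classical maximum principle does not apply verbatim, is the indicator $\indicator_{\{x>0\}}$ in $H$: it is neither smooth nor concave in $x$, so there is no usable $\partial_x H$ term and no joint concavity in $(x,q)$ to invoke. The plan sidesteps this entirely — the $x$-dependence is \emph{discrete} (an indicator of a monotone state), so the Hamiltonian comparison above is an elementary sign analysis rather than a derivative estimate, and the vanishing of $\partial_x H$ is precisely what makes $e^{-rt}Y_t$ constant and the integration by parts exact. The one delicate point is the case-dependent verification that $Y_t\le1$ on $[0,\tau^*)$, needed both for admissibility of $q^*$ and to identify $q^*$ as the pointwise maximizer; this is where the thresholds separating the three regimes have to be matched.
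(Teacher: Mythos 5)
Your proof is correct and takes essentially the same route as the paper's: the identical decomposition of $J(q^*)-J(q)$ into a Hamiltonian difference plus an adjoint term, the maximum condition on $[0,\tau^*)$, the non-positivity of $H$ on $[\tau^*,T]$ (the paper's \eqref{rem-H}), and the constancy of $e^{-rt}Y_t$ combined with the defining property of $\tau^*(x_0)$, resp.\ $Y_T(x_0)$, to sign the remaining term. The only difference is bookkeeping: you integrate over all of $[0,T]$ and collapse the adjoint term to the boundary quantity $e^{-rT}Y_T\,(X^q_T-X^*_T)$, whereas the paper splits the integrals at $\bar\tau_q$ and $\bar\tau^*$ and distinguishes the cases $\tau^*\geq\tau_q$ and $\tau^*<\tau_q$; since $Y_0=e^{-rT}Y_T$, the two computations are identical.
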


\begin{proof}
	For notational convenience we drop the dependence of $\tau^*,Y_T$ and $q^*$ on $x_0$. A direct computation verifies that $q^*$ is admissible. Let 
$q$ be another admissible control and $\tau_q$ be the corresponding exhaustion time. We put 
\[	
	\bar\tau_q :=\tau_{q}\wedge T \quad \mbox{and} \quad \bar\tau^* :=\tau^{*}\wedge T. 
\]	
By definition of the adjoint process $e^{-rt}Y_t \equiv Y_0$ on $[0,T]$ where $Y_0 \geq 0$ and $Y_0 = 0$ if $\tau^* > T$. Moreover, for any admissible control $q$ 
\begin{equation}\label{q*1}
	q^*_t =\argmax_{\bar q \in \R_+}H(X^q_t,\bar q,Y_t) \quad \mbox{on} \quad [0,\tau^* \wedge \tau^q].
\end{equation}	
We now distinguish two cases.

\begin{itemize}
\item {The case $\tau^*\geq\tau_q$.} In this case it follows from \eqref{q*1} that
\begin{align*}
	J(q^*)-J(q) &=\int_0^{\bar\tau^*}e^{-rt}q^*_t(1-q^*_t)\,dt-\int_0^{\bar\tau_q} e^{-rt}q_t(1-q_t)\,dt\\
	&=\int_0^{\bar\tau_q}e^{-rt}(H(X^*_t,q^*_t,Y_t)-H(X^q_t, q_t,Y_t)+(q^*_t-q_t)Y_t)\,dt \\ 
	& \quad +\int_{\bar\tau_q}^{\bar\tau^*}e^{-rt}(H(X^*_t, q^*_t,Y_t)+q^*_tY_t)\,dt\\
	&\geq\int_0^{\bar\tau_q}(q^*_t-q_t)Y_te^{-rt}\,dt+\int_{\bar\tau_q}^{\bar\tau^*}q^*_tY_te^{-rt}\,dt\\
 	& = Y_0\left(\int_0^{\bar\tau^*}q^*_t\,dt - \int_0^{\bar\tau_q}q_t\,dt\right)\blk\\	
	&=0
\end{align*}
where the last equality follows from the fact that $Y_0 = 0$ if $\tau^* 
> T$ and that the term in parenthesis vanishes if $\bar \tau^* \leq T$. 

\item {The case $\tau^*<\tau_q$.} In this case, it follows again from \eqref{q*1} that
\begin{align*}
	J(q^*)-J(q) 
	&= \int_0^{\bar\tau^*}e^{-rt}(H(X^*_t,q^*_t,Y_t)-H(X^q_t, q_t,Y_t)+(q^*_t-q_t)Y_t)\,dt \\ 
	& \quad -\int_{\bar\tau_*}^{\bar\tau^q}e^{-rt}(H(X_t, q_t,Y_t)+q_tY_t)\,dt\\
	&\geq\int_0^{\bar\tau^*}(q^*_t-q_t)Y_te^{-rt}\,dt
	-\int_{\bar\tau^*}^{\bar\tau^q}e^{-rt}(H(X^q_t, q_t,Y_t)+q_tY_t)\,dt.
\end{align*}
In view of Equation \eqref{rem-H}, the Hamiltonian is non-positive on $[\tau^*,T]$ and hence 
\[
	J(q^*)-J(q) \geq Y_0 \left( \int_0^{\bar\tau^*}q^*_t \, dt - \int_0^{\bar \tau^q} q_t\,dt \right).
\]
If $\tau^* > T$, then $Y_0=0$. Else, $Y_0 \geq 0$, $\int_0^{\bar \tau^*}q^*_t dt = x_0$ and $\int_0^{\bar \tau^q} q_t\,dt = x_0 - X^q_{\bar \tau^q}$. As a result, $$J(q^*)-J(q) \geq 0.$$
\end{itemize}
\end{proof}

Having determined the optimal control allows us to compute the value function.
	
\begin{corollary}
The value function to the control problem \eqref{value-function-monopoly} 
is given by the smooth function
\[
u(x_0)=\frac{1}{4r}\begin{cases}\left(1+\W(-e^{-1-2rx_0})\right)^2,  &0\leq x_0\leq \frac{T}{2}-\frac{1-e^{-rT}}{2r};\\
1-e^{-rT}-\frac{r^2(T-2x_0)^2}{e^{rT}-1}, & \frac{T}{2}-\frac{1-e^{-rT}}{2r}\leq x_0\leq \frac{T}{2};\\
1-e^{-rT}, & x_0\geq\frac{T}{2}.
\end{cases}
\]
\end{corollary}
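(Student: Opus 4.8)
The plan is to evaluate the objective at the optimal control $q^*$ from \Cref{verification-monopoly} and to exploit the structure of the adjoint process. Along $[0,\bar\tau^*]$ the maximum condition \eqref{optimal-q} gives $q^*_t=\tfrac12(1-Y_t)$, so that $1-q^*_t=\tfrac12(1+Y_t)$ and hence $q^*_t(1-q^*_t)=\tfrac14(1-Y_t^2)$. Since by construction $e^{-rt}Y_t\equiv Y_0$, we have $e^{-rt}Y_t^2=Y_0^2e^{rt}$, and therefore
\[
	J(q^*)=\frac14\int_0^{\bar\tau^*}\big(e^{-rt}-Y_0^2e^{rt}\big)\,dt
	=\frac{1}{4r}\Big[(1-e^{-r\bar\tau^*})-Y_0^2\big(e^{r\bar\tau^*}-1\big)\Big].
\]
Thus the whole computation reduces to reading off the pair $(\bar\tau^*,Y_0)$ in each of the three regimes and simplifying.

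First I would dispatch the easy branches. In the regime $\tau^*=+\infty$ we have $Y_0=0$ and $\bar\tau^*=T$, giving at once the third branch $\frac{1}{4r}(1-e^{-rT})$. In the regime $\tau^*=T$ we have $\bar\tau^*=T$ and $Y_0=Y_Te^{-rT}$; substituting $Y_T=r(T-2x_0)/(1-e^{-rT})$ and using $e^{rT}-1=e^{rT}(1-e^{-rT})$ collapses $Y_0^2(e^{rT}-1)$ to $r^2(T-2x_0)^2/(e^{rT}-1)$, which is exactly the second branch. In the regime $\tau^*<T$ we have $\bar\tau^*=\tau^*$ and $Y_0=e^{-r\tau^*}$ (since $Y_{\tau^*}=1$), and the bracket simplifies to $(1-e^{-r\tau^*})^2$, so that $J(q^*)=\frac{1}{4r}(1-e^{-r\tau^*})^2$.

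The one genuinely nontrivial step is to re-express this last quantity through the Lambert-W function. Setting $w:=\W(-e^{-1-2rx_0})$, the defining relation $we^w=-e^{-1-2rx_0}$ together with the closed form $r\tau^*(x_0)=2rx_0+1+w$ from \eqref{optimal-tau-monopoly} yields $e^{-r\tau^*}=e^{-1-2rx_0}e^{-w}=(-we^{w})e^{-w}=-w$. Hence $1-e^{-r\tau^*}=1+w$ and $J(q^*)=\frac{1}{4r}(1+\W(-e^{-1-2rx_0}))^2$, the first branch. I expect the bookkeeping in this identity to be the main obstacle; the three integrations are otherwise routine.

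It remains to justify the regularity recorded in the statement, i.e.\ that the branches glue together without a kink at the junctions $x_0=\frac{T}{2}-\frac{1-e^{-rT}}{2r}$ and $x_0=T/2$. Each branch is real-analytic in the interior of its interval, so only the two junctions are at issue, and the envelope identity $u'(x_0)=Y_0$ makes the check transparent: differentiating the first branch and using $w'=-2rw/(1+w)$ gives $u'(x_0)=-w=e^{-r\tau^*}=Y_0$, while the second and third branches give $u'(x_0)=r(T-2x_0)/(e^{rT}-1)=Y_0$ and $u'(x_0)=0=Y_0$, respectively. Since $\tau^*=T$ at the lower junction and $x_0=T/2$ at the upper junction force the neighboring values of both $u$ and $Y_0$ to agree (as recorded by $Y_{\tau^*}=1$ and $Y_T(T/2)=0$ in the case analysis preceding \Cref{verification-monopoly}), continuity of $u$ and $u'$---the smooth fit claimed---follows.
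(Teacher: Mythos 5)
Your proof is correct and follows the route the paper intends: the corollary is stated without proof as a direct consequence of \cref{verification-monopoly} (``Having determined the optimal control allows us to compute the value function''), and your computation---reducing $q^*_t(1-q^*_t)$ to $\tfrac14(1-Y_t^2)$, exploiting $e^{-rt}Y_t\equiv Y_0$ to get $J(q^*)=\tfrac{1}{4r}\bigl[(1-e^{-r\bar\tau^*})-Y_0^2(e^{r\bar\tau^*}-1)\bigr]$, and the Lambert-W identity $e^{-r\tau^*}=-\W(-e^{-1-2rx_0})$---carries out exactly that evaluation correctly in all three regimes. Your closing $C^1$-gluing argument is also sound and is in fact the right reading of ``smooth'' in the statement: the value function exhibits smooth fit ($u'$ continuous, with $u'(x_0)=Y_0$ across the junctions) but fails to be $C^2$ at $x_0=T/2$, so no stronger regularity could be proved.
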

This corresponds with the solution to the dynamic programming equation in 
the $T=\infty$ case found in \cite[Section 5]{HHS}.


\section{The Mean Field Game} \label{sec:game}

To motivate the form of demand functions that we are going to use in the continuum MFG, we first introduce a finite market with $N$ oil producers that compete for market share in a one-period game. Associated to each firm $i \in \{ 1, \dots, N \}$ are variables $p_i\in\R$ and $q_i \in \R_+$ representing the price and quantity, respectively. In the Cournot model, players choose quantities as a strategic variable in non-cooperative competition with the other firms, and the market determines the price of each 
good. The market model is specified by linear inverse demand functions, which give prices as a function of quantity produced.
The firms are suppliers, and so quantities are nonnegative.
For $q\in\R^N_+$, the price received by player $i$ is $p_i=P_i(q)$ where
\begin{equation}\label{invdem}
P_i(q) = 1 - \left( q_i + \epsilon \bar q_i \right), \quad \mbox{where}\quad \bar q_i = \frac{1}{N-1} \sum_{j \neq i} q_j,  \,\,i=1,\cdots,N, \quad \mbox{and}\quad 0\leq\epsilon<N-1.
\end{equation}

The inverse demand functions are decreasing in all of the quantities, and 
$\epsilon$ measures the strength of interaction between players.
In the linear model \eqref{invdem}, some of the prices $p_i=P_i(q)$ may 
be negative, meaning player $i$ produces so much that he has to pay to have his goods taken away, but negative prices do not arise in competitive equilibrium. Moreover, the goods are similar but differentiated, meaning each player potentially receives a different price as there will be some residual loyalty to obtaining the product from individual suppliers. Most 
crucially, the interaction is of mean field type: $p_i$ is affected by the mean production of the other players, and players $j$ and $k$ ($j,k\neq 
i$) are exchangeable as far as player $i$ is concerned. 

In the MFG version, there is a continuum of players, say oil producers, who are labeled by their reserves at time. Initial reserves are distributed according to the probability measure $\mu$ on $[0,\infty)$. 
Each producer extracts oil in continuous time at a rate $q_t\geq0$, and the price received by this producer is $P(q_t,Q_t)=1-q_t-\varepsilon Q_t$, where $Q_t$ is the mean production rate of all the players, and $\varepsilon\geq0$ quantifies the degree of interaction between the players. In 
the following, the time horizon $T\leq\infty$.


\subsection{The best response function\label{mfginf}}

In a first step, we consider the representative player's best response {\em given} the aggregate production of all the other players.  
Aggregate production is described by a nonnegative and absolutely continuous function $Q:[0,T]\blk\rightarrow [0,\frac{1}{2+\varepsilon}]$. {The derivative of $Q$ exists a.e.~and is denoted by $\dot Q$.} 

\begin{remark}
It is in fact sufficient here to simply assume  $Q\leq 1/\varepsilon$ which is immediately evident to guarantee nonnegative prices. The a priori bounds $0\leq Q\leq1/(2+\varepsilon)$ are motivated by the following observation: for any candidate optimal control we have 
\[
	0\leq q \leq\argmax_{q\geq0}\{q(1-\varepsilon Q-q)\}=\frac{1}{2}(1-\varepsilon Q)^+,
\]
where the right-hand side is the optimal control given infinite reserves. 
Any aggregate production function $Q$ leading to a solution to the MFG therefore necessarily satisfies 
\[ 0\leq Q\leq \frac{1}{2}(1-\varepsilon Q) \quad \mbox{and so} \quad Q\leq\frac{1}{2+\varepsilon}. \]
\end{remark}

We assume throughout that the aggregate production function satisfies the 
following compatibility condition. We will see that this condition guarantees that each player fully exploits her initial resources if $T$ is large enough. The assumption will be satisfied in equilibrium.  

\begin{assumption}\label{assump1}
There exists $\delta > 0$ such that the aggregate exploitation rate $Q:[0,T]\blk\rightarrow [0,\frac{1}{2+\varepsilon}]$ satisfies the compatibility condition
\begin{equation} \label{simple-cc}
	 1-\varepsilon Q+\frac{\varepsilon}{r} \dot Q \geq \delta >0.
\end{equation}
\end{assumption}

\begin{remark}
Our compatibility condition is equivalent to $\frac{d}{dt}(e^{-rt}Q_t)>\frac1\varepsilon\frac{d}{dt}(e^{-rt})$. Since we expect $Q$ to be decreasing in equilibrium (production slows as resources run out), the condition \eqref{simple-cc} puts a lower bound on how quickly that may occur over time.
\end{remark}

Let us now consider a representative producer with any initial state $x_0\in[0,\infty)$ at time $0$. We call a control $q:[0,T]\to[0,1]$ admissible if the state process
\[
	X_t^q :=x_0-\int_0^t q_u\,du
\]
 is always non-negative.  
 For $T < \infty$ or $T=\infty$, depending on whether the finite or infinite horizon case is considered, 
the value function for the representative producer with respect to a given function $Q$ is defined by
\begin{equation} \label{fixed-Q-control-problem}
	u^Q(x_0)=\sup_{q \geq 0}J^Q(q), \quad \mbox{where }\quad  J^Q(q):=	
	\int_0^{\tau_q^Q\wedge T}e^{-rt}q_t(1-\varepsilon Q_t-q_t)\,dt,
\end{equation}
and the exhaustion time $\tau^Q_q=\tau^Q_q(x_0)$ is 
\[ \tau_q^Q := \inf\left\{t>0\mid X_t^q  = 0\right\}. \]
By analogy to the single player case, the corresponding Hamiltonian is given by 
\begin{equation*}
	H(x,q,y)=q(1- \varepsilon Q - q)\indicator_{\{x>0\}}-qy. \label{hammfg}
\end{equation*}


\subsubsection{Heuristics by the maximum principle}

As in the single player case Pontryagin's maximum principle formally states that if $q^Q$ is an optimal control, with associated {\sl optimal} state process $X^{q^Q}$ and vanishing time  $\tau^Q$, then there exists an adjoint process $Y$ given by 
\begin{equation}
\label{hotelling}
	Y_t = Y_{\tau^Q} e^{-r (\tau^Q-t)}, \quad t \geq 0 
\end{equation}
and $q^Q$ satisfies the maximum condition
\begin{equation} \label{optimal-control-MFG-infinite}
	q^Q_t = \argmax_{\bar q \in \R_+} H(X^{q^Q}_t,\bar q,Y_t)=\frac{1}{2} \left( 1-\varepsilon Q_t-Y_t \right)^+.
\end{equation}
The challenge is again that the terminal condition of the adjoint process 
is unknown. As in the single player case we distinguish again three different cases depending on whether full exploitation is optimal or not. 

\begin{itemize}
	\item {The case $\tau^Q < T\leq\infty$.} In this case we expect $q^Q_{\tau^Q}=0$, and so, from \eqref{optimal-control-MFG-infinite},
the terminal condition of the adjoint process is $Y_{\tau^Q}=1-\varepsilon Q_{\tau^Q}$.
Hence, from \eqref{hotelling}, we have
\[
	Y_t=(1-\varepsilon Q_{\tau^Q})e^{-r(\tau^Q-t)},\quad t<\tau^Q,
\] 
and the optimal strategy is
\begin{equation}
	q^Q_t(x_0) = \frac{1}{2}\left( 1-\varepsilon Q_t- (1-\varepsilon Q_{\tau^Q(x_0)})e^{-r(\tau^Q(x_0)-t)} \right),\quad t<\tau^Q.	\label{qQ}
\end{equation}

\item {The case $\tau^Q = T<\infty$.} 
In this case the terminal value $Y_T(x_0)$ of the adjoint process is obtained by the identity
\[
	x_0=\int_0^T q_t^Q(x_0)\,dt=\frac{1}{2}\int_0^T \left(1-\varepsilon Q_t-Y_T(x_0) e^{-r(T-t)}\right)
	\,dt,
\]
which yields
\begin{equation}
	Y_T(x_0)=\frac{1}{\beta_T}\left(\frac12\int_0^T(1-\varepsilon Q_t)\,dt-x_0\right), \quad \mbox{where} \quad \beta_T := \frac{1-e^{-rT}}{2r}. \label{betadef}
\end{equation}
In particular, the terminal condition depends on the initial resource level and $Y\geq0$ if and only if 
\begin{equation}
x_0\leq \eta^Q(T), \quad \mbox{where} \quad \eta^Q(T):=\frac{1}{2}\int_0^T(1-\varepsilon Q_t)\,dt. \label{etadef}
\end{equation}
In this case, the candidate optimal strategy is
\[
	q^Q_t(x_0) = \frac12\left(1-\varepsilon Q_t-Y_T(x_0) e^{-r(T-t)}\right), \quad t<T.
\]

\item {The case $T<\infty$ and $\tau^Q >T$.} 
In this case we expect the optimization problem to be independent of $x_0$ and hence that $Y \equiv 0$. From \eqref{optimal-control-MFG-infinite}, 
this leads to the candidate optimal strategy
\[
	q^Q_t(x_0) = \frac{1}{2}\left( 1 - \varepsilon Q_t \right).
\] 
Thus, this case occurs when $x_0\geq\eta^Q(T)$.
\end{itemize}


\subsubsection{Verification Result}

In order to carry out the verification argument it will be convenient to define the function 
\begin{equation} \label{xiQ}
	\xi^Q(t) :=\frac{1}{2}\int_0^{t} \{1-\varepsilon Q_s- (1-\varepsilon Q_{t})e^{-r(t-s)}\}\,ds,
\end{equation}
which is the amount of resource extracted up to time $t$ by using the strategy \eqref{qQ} that would be optimal if the exhaustion time $\tau^Q$ was equal to $t$. The function $\xi^Q$ allows us to determine the optimal exploitation time as a function of the initial resource level. In fact the 
derivative of $\xi^Q$ is given by
\begin{equation}\label{derivative-xiQ}
	\dot\xi^Q(t)=\frac{1}{2}\left\{1-\varepsilon Q_{t}+\frac{\varepsilon}{r} \dot Q_{t}\right\}(1-e^{-rt}).
\end{equation}
The compatibility condition \eqref{simple-cc} guarantees that $\dot \xi^Q(t) \geq \frac{\delta}{2}(1-e^{-rt}) > 0$. As a result, the inverse $(\xi^Q)^{-1}$ is well-defined on $[0,\xi^Q(T)]$. By Assumption \ref{assump1}, 
$\xi^Q(\infty)=\infty$. Moreover, we see that
\begin{equation}
	\xi^Q(\tau^Q(x_0)) = \int_0^{\tau^{Q}(x_0)} q^Q_s(x_0) \, ds, \label{rela}
\end{equation}
and hence that $\tau^Q(x_0)$ is determined implicitly by the identity
\[
	\xi^Q(\tau^Q(x_0)) = x_0, \quad \mbox{i.e.} \quad \tau^Q(x_0) = (\xi^Q)^{-1}(x_0) \quad \mbox{for} \quad x_0\leq \xi^Q(T).
\]

The preceding heuristics suggest that an optimal exploitation strategy for a given initial resource level $x_0$ and a given aggregate production function $Q$ that satisfies our compatibility condition can be obtained by 
first computing the optimal exploitation time by inverting the function $\xi^Q$ introduced in \eqref{xiQ} on its range $[0,\xi^Q(T)]$, and then applying the maximum principle on $[0,\tau^Q(x_0)]$.  The following result shows that the above heuristics do indeed give us an optimal control. 

\begin{theorem}[Verification Theorem] \label{verification}
Let $Q:[0,T]\to\mathbb [0,\frac{1}{2+\varepsilon}]$ be an absolutely continuous function that satisfies the compatibility condition \eqref{simple-cc}. Define the function $\xi^Q:[0,T]\to[0,\xi^Q(T)]$ by \eqref{xiQ}, and 
let $\tau^Q=(\xi^Q)^{-1}$.

\begin{itemize}
	\item $T =\infty$. For initial state $x_0\in[0,\infty)$, $\tau^Q(x_0)<\infty$, and the optimal control is given by
\begin{equation} \label{optimal-control-infinite-time}
	q^Q_t(x_0)= \frac{1}{2}\{1-\varepsilon Q_t-(1-\varepsilon Q_{\tau^Q(x_0)})e^{-r(\tau^Q(x_0)-t)}\}\indicator_{\{t\leq \tau^Q(x_0)\}}.
\end{equation}
	\item $T<\infty$. The optimal control for the initial state $x_0\in[0,\infty)$ is given by
\begin{equation} \label{optimal-control-finite-time}
	q^Q_t(x_0)=
	\begin{cases}
	\frac{1}{2}\{1-\varepsilon Q_t-(1-\varepsilon Q_{\tau^Q(x_0)})e^{-r(\tau^Q(x_0)-t)}\}\indicator_{\{t\leq \tau^Q(x_0)\}}, & x_0\in[0,\xi^Q(T)]\\
	\frac12\left(1-\varepsilon Q_t-Y_T(x_0) e^{-r(T-t)}\right)
	& x_0\in[\xi^Q(T),\eta^Q(T)], \\
	\frac{1}{2}(1-\varepsilon Q_t), & \mbox{else},
\end{cases}
\end{equation}
where $Y_T(x_0)$ and $\eta^Q(T)$ are defined in \eqref{betadef} and \eqref{etadef}, respectively.
\end{itemize}
\end{theorem}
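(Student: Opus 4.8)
The plan is to follow the structure of the monopoly verification theorem (Theorem~\ref{verification-monopoly}), treating the fixed aggregate production $Q$ as an exogenous datum so that \eqref{fixed-Q-control-problem} becomes a genuine single-agent control problem. The one unifying observation I would isolate at the outset is that the compatibility condition \eqref{simple-cc} says precisely that the function $\phi(t):=e^{-rt}(1-\varepsilon Q_t)$ is strictly decreasing: indeed
\[
	\dot\phi(t)=-r e^{-rt}\Big(1-\varepsilon Q_t+\tfrac{\varepsilon}{r}\dot Q_t\Big)\le -r\delta\, e^{-rt}<0.
\]
This single monotonicity drives every step below. Writing $\bar\tau^Q:=\tau^Q\wedge T$ and, for an arbitrary admissible competitor $q$, $\bar\tau_q:=\tau_q\wedge T$, I extend the adjoint process to all of $[0,T]$ by the Hotelling rule \eqref{hotelling} with the case-dependent terminal value from the heuristics, so that $e^{-rt}Y_t\equiv Y_0$.

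First I would verify admissibility of the candidate $q^Q$. Using $Y_t=\phi(\tau^Q)e^{rt}$ and $1-\varepsilon Q_t=\phi(t)e^{rt}$, nonnegativity of $q^Q_t=\tfrac12(1-\varepsilon Q_t-Y_t)$ in the first regime reduces to $\phi(t)\ge\phi(\tau^Q)$ for $t\le\tau^Q$, immediate from monotonicity of $\phi$; in the middle regime it follows from the same monotonicity together with $Y_T(x_0)\le 1-\varepsilon Q_T$ (which holds on $[\xi^Q(T),\eta^Q(T)]$ via the identity $\eta^Q(T)-\xi^Q(T)=(1-\varepsilon Q_T)\beta_T$), and in the last regime it is clear since $\varepsilon Q\le\varepsilon/(2+\varepsilon)<1$. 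For the state constraint I would invoke the defining relation $\xi^Q(\tau^Q(x_0))=x_0$ from \eqref{rela} in the first regime, the defining identity of $Y_T(x_0)$ in the second, and $\int_0^T q^Q\,dt=\eta^Q(T)<x_0$ in the third, so that $X^{q^Q}\ge0$ throughout and $Y_0\ge0$ (with $Y_0=0$ exactly in the non-exhausting regime).

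Next I would record the two properties of $Y$ needed for the comparison. The pointwise maximum condition $q^Q_t=\argmax_{\bar q\ge0}H(X^q_t,\bar q,Y_t)$ holds on $[0,\tau^Q\wedge\tau_q]$, because there $X^q_t>0$ makes the indicator equal to one and the maximiser $\tfrac12(1-\varepsilon Q_t-Y_t)^+$ coincides with $q^Q_t$. The crucial second property is the analogue of \eqref{rem-H}, namely $H(X^q_t,q_t,Y_t)\le0$ on $[\tau^Q,T]$ for every admissible $q$. Since $H(x,q,y)\le0$ for all $q\ge0$ exactly when $y\ge1-\varepsilon Q$ (and trivially when $x=0$), it suffices to show $Y_t\ge1-\varepsilon Q_t$ on $[\tau^Q,T]$, i.e. $\phi(\tau^Q)=Y_0\ge\phi(t)$, which once more is just monotonicity of $\phi$.

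Finally I would run the comparison with an arbitrary admissible control $q$ verbatim as in Theorem~\ref{verification-monopoly}, using $q(1-\varepsilon Q-q)=H(x,q,y)+qy$ on $\{x>0\}$ and splitting into $\tau^Q\ge\tau_q$ and $\tau^Q<\tau_q$. In the first split the maximum condition handles $[0,\bar\tau_q]$ while $H(X^{q^Q}_t,q^Q_t,Y_t)\ge H(X^{q^Q}_t,0,Y_t)=0$ handles $[\bar\tau_q,\bar\tau^Q]$; in the second split the post-exhaustion sign of $H$ controls the extra interval $[\bar\tau^Q,\bar\tau_q]$. Both collapse, using $e^{-rt}Y_t\equiv Y_0$, to
\[
	J^Q(q^Q)-J^Q(q)\ge Y_0\Big(\int_0^{\bar\tau^Q}q^Q_t\,dt-\int_0^{\bar\tau_q}q_t\,dt\Big),
\]
which is nonnegative because either $Y_0=0$, or else $Y_0\ge0$ with $\int_0^{\bar\tau^Q}q^Q_t\,dt=x_0$ and $\int_0^{\bar\tau_q}q_t\,dt=x_0-X^q_{\bar\tau_q}\le x_0$ by admissibility. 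The main obstacle is not any single estimate but the bookkeeping of the three exploitation regimes against the two orderings of $\tau^Q$ and $\tau_q$; the payoff of isolating the monotonicity of $\phi$ first is that admissibility, the maximum condition, and the post-exhaustion sign of the Hamiltonian all reduce to that one fact.
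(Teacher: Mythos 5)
Your proposal is correct and follows essentially the same route as the paper: the same Hamiltonian decomposition of $J^Q(q^Q)-J^Q(q)$, the same two-case split on the ordering of $\tau_q$ and the candidate's exhaustion time, and the same collapse via $e^{-rt}Y_t\equiv Y_0$ to the sign of $Y_0\bigl(\int_0^{\bar\tau_{q^Q}}q^Q_t\,dt-\int_0^{\bar\tau_q}q_t\,dt\bigr)$. One point in your favor: where the paper justifies the post-exhaustion bound $H\leq 0$ on $[\tau^Q,T]$ by asserting ``$Y\geq 1$'' (which is only correct in the monopoly case $\varepsilon=0$), you prove the statement actually needed, namely $Y_t\geq 1-\varepsilon Q_t$, and you derive it cleanly from the strict monotonicity of $\phi(t)=e^{-rt}(1-\varepsilon Q_t)$ --- which is exactly the reformulation of the compatibility condition \eqref{simple-cc} given in the remark after Assumption~\ref{assump1}; the same monotonicity also powers your admissibility checks (e.g.\ $Y_T(x_0)\leq 1-\varepsilon Q_T$ via $\eta^Q(T)-\xi^Q(T)=(1-\varepsilon Q_T)\beta_T$), which the paper leaves as ``straightforward to check.''
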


\begin{proof}  For notational convenience we drop the dependence of $q^Q$ 
and $\tau^Q$ on the initial state $x_0$. The proof is similar to the monopoly case; we give it for completeness. First, we recall from the argument following \eqref{derivative-xiQ}, that $\tau^Q$ is well-defined. Moreover, it is straightforward to check that i) the candidate $q^Q$ in \eqref{optimal-control-infinite-time} and \eqref{optimal-control-finite-time} are of the form  
\begin{equation} \label{q-mc} 
	q_t^Q=\frac{1}{2}(1-\varepsilon Q_t-Y_t)^+  
\end{equation} 
where the adjoint process $Y$ is defined in \eqref{hotelling} , and ii) are admissible. In particular $q^Q$, and the corresponding state process $X^{q^Q}$ are non-negative. Denoting $\bar\tau_{q^Q} := \tau_{q^Q}\wedge 
T$, we see from the formulas \eqref{optimal-control-infinite-time} and \eqref{optimal-control-finite-time} that $\bar\tau_{q^Q} = \tau^Q\wedge T$. In order to verify the optimality of the strategy $q^Q$  for the first 
case in \eqref{optimal-control-finite-time}, we fix an admissible control 
$q$, whose exhaustion time is $\tau_q$, with the convention $\tau_q=\infty$ if the strategy does not exhaust all of the resource. We denote by $\bar\tau_q:=\tau_q\wedge T$. Admissibility requires that $ \int_0^{\bar\tau_q}q_t\,dt\leq x_0$. Moreover, since $q^Q$ satisfies the maximum condition \eqref{optimal-control-MFG-infinite}, 
\begin{equation} \label{Hamiltonian1}
	H(x,q^Q_t,Y_t) - H(x,q_t,Y_t) \geq 0 \quad \mbox{for all} \quad x>0.  
\end{equation}

\begin{itemize} 
\item The case $\tau_q\leq\tau_{q^Q}$. In this case it follows from \eqref{Hamiltonian1} and $Y_te^{-rt}=Y_0$ that
\begin{align*}
	J^Q(q^Q)-J^Q(q) &=\int_0^{\bar\tau_q}e^{-rt}(H(X^{q^Q}_t,q^Q_t,Y_t)-H(X^q_t,q_t,Y_t)+(q^Q_t-q_t)Y_t)\,dt\\
	&~~ +\int_{\bar\tau_q}^{\bar\tau_{q^Q}}e^{-rt}(H(X^{q^Q}_t,q^Q_t,Y_t)+q^Q_tY_t)\,dt\\
	&\geq\int_0^{\bar\tau_q}(q^Q_t-q_t)Y_te^{-rt}\,dt+\int_{\bar\tau_q}^{\bar\tau_{q^Q}}q^Q_tY_te^{-rt}\,dt\\
 &  = Y_0\left(\int_0^{\bar\tau_{q^Q}}q^Q_t\,dt - \int_0^{\bar\tau_q}q_t\,dt\right)\blk. 
\end{align*}
As in the monopoly case, $Y_0 = 0$ if $\tau^Q > T$. If $\tau_q \leq \tau^Q \leq T$, then the term in parenthesis vanishes. In both cases 
\[
	J^Q(q^Q)-J^Q(q) \geq 0.
\]

\item The case $\tau_q > \tau^Q$. In this case it follows again from \eqref{Hamiltonian1} that
\begin{align*}
	J^Q(q^Q)-J^Q(q) & = \int_0^{\bar\tau_{q^Q}}e^{-rt}(H(X^{q^Q}_t,q^Q_t,Y_t)-H(X^q_t,q_t,Y_t)+(q^Q_t-q_t)Y_t)\,dt\\
	&~~ - \int_{\bar\tau_{q^Q}}^{\bar\tau_{q}}e^{-rt}(H(X^{q}_t,q_t,Y_t)+q_tY_t)\,dt\\	
	&\geq \int_0^{\bar \tau_{q^Q}}  e^{-rt}(q_t^Q-q_t)Y_t(x_0)\,dt -\int_{\bar\tau_{q^Q}}^{\bar\tau_{q}}e^{-rt}(H(X^{q}_t,q_t,Y_t)+q_tY_t)\,dt.\\	
\end{align*}
Using that $Y\geq 1$ on $[\tau^{q^Q},T]$, and hence that $H\leq0$ for all 
$(q,t)\in\mathbb [0,\infty)\times[\tau^{q^Q},T]$, we arrive at
\[
	J^Q(q^Q)-J^Q(q)  \geq Y_0 \left( \int_0^{\bar \tau_{q^Q}} q_t^Q \,dt - \int_0^{\bar \tau_{q}} q_t \,dt \right).  	
\] 

If $\tau^{q^Q} > T$, then $Y_0=0$. Else, $Y_0 \geq 0$, $ \int_0^{\bar \tau_{q^Q}} q_t^Q \,dt = x_0$ while $\int_0^{\bar \tau_{q}} q_t \,dt \leq x_0$ because $\tau_q > \tau^Q$. As a result, 
\[
	J^Q(q^Q)-J^Q(q) \geq 0.
\]
\end{itemize}
\qedhere
\end{proof}
It follows from the verification theorem that all the optimal production rates $q_t^Q$ satisfy the same ODE, albeit on possibly different time horizons and with possibly different terminal conditions. Moreover, from \eqref{derivative-xiQ}, the mapping $x_0 \mapsto \tau^Q(x_0)$ is strictly increasing and so $$\{x_0 \geq 0 : \tau^Q(x_0) < t\} = \{x_0 \geq 0 : x_0 
< \xi^Q(t)\}.$$ In particular, if we denote by $S_\mu(x_0)=\int_{x_0}^\infty\mu(dx)$ the survival function associated with the initial distribution $\mu$,
then the proportion of players with remaining resources at time $t \in [0,T]$ is given by $S_\mu(\xi^Q(t))$. With this, we have the following corollary.

\begin{corollary}\label{cor-Q}
For all $x_0 > 0$ and all functions $Q$ that satisfy the compatibility condition \eqref{simple-cc}, the optimal rate $q^Q_t(x_0)$  and the aggregate production rate \begin{equation}Q_t^Q: = \int_{\xi^Q(t)}^\infty q^Q_t(x_0)\,d\mu(x_0), \quad 0\leq t \leq T \label{fxpt}
\end{equation}
satisfy the ODEs
\begin{align} 
	q^Q_t(x_0)-\frac{1}{r}\dot q^Q_t(x_0)&=\frac{1}{2}\left\{1-\varepsilon 
Q_t+\frac{\varepsilon}{r}\dot Q_t\right\} \label{ODE-q}\\
Q_t^Q-\frac{1}{r}\dot Q_t^Q&=\frac{1}{2}\left\{1-\varepsilon Q_t+\frac{\varepsilon}{r}\dot Q_t\right\}S_\mu(\xi^Q(t)), \label{terminal-q}
\end{align}	
on the time interval $[0,\tau^Q(x_0)\wedge T]$.
In the case $T = +\infty$, the terminal condition for \eqref{ODE-q} is $q^Q_{\tau^Q(x_0)} =0$, while for \eqref{terminal-q}, $\lim_{t \to \infty} Q_t = 0$.
If $T < \infty$, then the optimal production rate has terminal condition 
\begin{equation} \label{terminal-q1}
	q^Q_{\tau^Q(x_0) \wedge T}(x_0) =
	\begin{cases}
		0 & \mbox{ if } x_0\in[0,\xi^Q(T)] \\
		\frac12\left(1-\varepsilon Q_T-Y_T(x_0)\right)
		& \mbox{ if } x_0\in[\xi^Q(T), \eta^Q(T))] \\
		\frac{1}{2}(1 - \varepsilon Q_T) & \mbox{ else,}
	\end{cases}
\end{equation}
where $Y_T(x_0)$ was given in \eqref{betadef}. 
In this case, the terminal condition for the aggregate production rate is 

\begin{equation}
	Q^Q_T =\frac12(1 - \varepsilon Q_T)S_\mu(\xi^Q(T)) - \frac{\eta^Q(T)}{\beta_T}\left(S_\mu(\xi^Q(T)) - S_\mu(\eta^Q(T))\right)
	+\frac{1}{\beta_T}\int_{\xi^Q(T)}^{\eta^Q(T)}x_0\mu(dx_0). 	\label{QTeqn}
\end{equation}
\end{corollary}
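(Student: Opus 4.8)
The plan is to treat the corollary as a bookkeeping consequence of \cref{verification}: the two ODEs come from differentiating the closed-form optimal controls, while the terminal conditions are read off from the three regimes of the verification theorem and from evaluating the aggregation \eqref{fxpt} at the endpoints.

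First I would establish the pointwise ODE \eqref{ODE-q}. On the interval $[0,\tau^Q(x_0)\wedge T)$ the positive part in \eqref{q-mc} is inactive, so $q^Q_t(x_0)=\tfrac12(1-\varepsilon Q_t-Y_t)$ with $Y_t$ of the form \eqref{hotelling} (respectively $Y_t=Y_T(x_0)e^{-r(T-t)}$ in the finite-horizon middle regime). In either case $\dot Y_t=rY_t$. Differentiating $q^Q_t$ in $t$ gives $\dot q^Q_t=\tfrac12(-\varepsilon\dot Q_t-rY_t)$, and forming $q^Q_t-\tfrac1r\dot q^Q_t$ cancels the $Y_t$-terms and leaves exactly the right-hand side of \eqref{ODE-q}. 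This is a short computation, and it is where the absolute continuity of $Q$ enters (so that $\dot Q_t$ exists a.e.).

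Next I would pass from the pointwise ODE to the aggregate ODE \eqref{terminal-q}. Differentiating $Q^Q_t=\int_{\xi^Q(t)}^\infty q^Q_t(x_0)\,d\mu(x_0)$ in $t$ produces two contributions: the integral of $\dot q^Q_t(x_0)$ over $[\xi^Q(t),\infty)$, and a boundary term from the moving lower limit proportional to $q^Q_t(\xi^Q(t))$. The key observation is that the producer with initial reserve $x_0=\xi^Q(t)$ has exhaustion time $\tau^Q(x_0)=(\xi^Q)^{-1}(\xi^Q(t))=t$, so by the terminal condition $q^Q_{\tau^Q}=0$ this boundary term vanishes; hence $\dot Q^Q_t=\int_{\xi^Q(t)}^\infty\dot q^Q_t(x_0)\,d\mu(x_0)$. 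Integrating \eqref{ODE-q} in $x_0$ against $\mu$ over $[\xi^Q(t),\infty)$ and using $\int_{\xi^Q(t)}^\infty d\mu=S_\mu(\xi^Q(t))$ then yields \eqref{terminal-q}. I expect this step to be the main obstacle, because the differentiation under the integral sign and the vanishing of the boundary term must be justified for a general initial distribution $\mu$ (possibly with atoms, so $S_\mu$ need not be continuous); here the strict monotonicity $\dot\xi^Q\geq\tfrac{\delta}{2}(1-e^{-rt})>0$ from \eqref{derivative-xiQ} together with the uniform bound $0\le q^Q\le\tfrac12$ is what makes the argument go through, and some care is needed to phrase the differentiation in terms of the increasing map $t\mapsto\xi^Q(t)$ and the measure $\mu$ rather than a density.

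Finally I would collect the terminal data. For $T=\infty$, the identity $q^Q_{\tau^Q(x_0)}=0$ is exactly the terminal value $Y_{\tau^Q}=1-\varepsilon Q_{\tau^Q}$ from the first regime of \cref{verification}, while $\lim_{t\to\infty}Q^Q_t=0$ follows because $\xi^Q(\infty)=\infty$ by \cref{assump1}, so $S_\mu(\xi^Q(t))\to0$ and $q^Q$ is bounded. For $T<\infty$, the three lines of \eqref{terminal-q1} are the three regimes of \eqref{optimal-control-finite-time} evaluated at $t=\tau^Q(x_0)\wedge T$. To obtain \eqref{QTeqn} I would evaluate \eqref{fxpt} at $t=T$, split $\int_{\xi^Q(T)}^\infty$ at the threshold $\eta^Q(T)$, insert $q^Q_T(x_0)=\tfrac12(1-\varepsilon Q_T)$ on $[\eta^Q(T),\infty)$ and $q^Q_T(x_0)=\tfrac12(1-\varepsilon Q_T-Y_T(x_0))$ with $Y_T(x_0)=(\eta^Q(T)-x_0)/\beta_T$ (from \eqref{betadef}--\eqref{etadef}) on $[\xi^Q(T),\eta^Q(T)]$, and collect the $\mu$-integrals of the constant, of $\eta^Q(T)$, and of $x_0$ to arrive at \eqref{QTeqn}.
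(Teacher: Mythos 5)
Your proposal is correct and follows essentially the same route as the paper's own (very terse) proof: differentiate the closed-form control \eqref{q-mc} using $\dot Y_t = rY_t$ so that the $Y$-terms cancel in $q^Q_t-\tfrac1r\dot q^Q_t$, integrate \eqref{ODE-q} against $\mu$ over $(\xi^Q(t),\infty)$ to get \eqref{terminal-q}, and read the terminal conditions off the three regimes of \cref{verification}. The one place you go beyond the paper is the moving-boundary term in $\tfrac{d}{dt}\int_{\xi^Q(t)}^\infty q^Q_t(x_0)\,d\mu(x_0)$, which the paper discards silently; your justification that it vanishes because the player with reserve $\xi^Q(t)$ is exhausted exactly at time $t$, so $q^Q_t(\xi^Q(t))=0$, is the right one (equivalently, since $q^Q_t(x_0)=0$ for all $x_0\leq\xi^Q(t)$, one may integrate over the fixed domain $(0,\infty)$ and differentiate under the integral).

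One concrete caveat about your last step: carrying out the computation you describe does \emph{not} land exactly on \eqref{QTeqn} as printed. Inserting $q^Q_T(x_0)=\tfrac12\left(1-\varepsilon Q_T-Y_T(x_0)\right)$ with $Y_T(x_0)=(\eta^Q(T)-x_0)/\beta_T$ on $[\xi^Q(T),\eta^Q(T)]$ gives
\[
Q^Q_T=\tfrac12(1-\varepsilon Q_T)S_\mu(\xi^Q(T))-\frac{\eta^Q(T)}{2\beta_T}\bigl(S_\mu(\xi^Q(T))-S_\mu(\eta^Q(T))\bigr)+\frac{1}{2\beta_T}\int_{\xi^Q(T)}^{\eta^Q(T)}x_0\,\mu(dx_0),
\]
i.e., with coefficients $1/(2\beta_T)$ rather than $1/\beta_T$ in the last two terms, because the overall factor $\tfrac12$ multiplies $Y_T(x_0)$ as well. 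This matches the formula $Q^Q_T=\tfrac12\int_{\xi^Q(T)}^\infty\{\psi^Q(x_0)\wedge(1-\varepsilon Q^Q_T)\}\,d\mu(x_0)$ obtained in \cref{sec-solution-to-MFG}: indeed, using $\eta^Q(T)-\xi^Q(T)=\beta_T(1-\varepsilon Q_T)$, the middle-regime terminal rate equals $\tfrac12\psi^Q(x_0)=\frac{x_0-\xi^Q(T)}{2\beta_T}$. So \eqref{QTeqn} as printed appears to carry a factor-of-two slip; your method is sound and would produce the corrected formula, but you should have flagged the discrepancy rather than asserting that the bookkeeping yields \eqref{QTeqn} verbatim.
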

\begin{proof}
The ODE \eqref{ODE-q} follows from differentiating the expression \eqref{q-mc} for $q^Q$ (the positive part being unnecessary on $t\in[0,\tau^Q(x_0)\wedge T]$), using the ODE for $Y$, and substituting back for $Y$ in terms of $Q$ and $q^Q$. The ODE \eqref{terminal-q} for $Q^Q$ follows from integrating \eqref{ODE-q} with respect to $\mu(x_0)$ over $(\xi^Q(t),\infty)$. The terminal conditions for $q^Q$ follow from \eqref{optimal-control-infinite-time} and \eqref{optimal-control-finite-time}. The terminal condition for $Q^Q$ comes from integrating \eqref{terminal-q1} with respect to $\mu$ over $(0,\infty)$.
\end{proof}
\begin{remark}
Under the compatibility condition, we have $\xi^Q(T) \uparrow \infty$ as $T \uparrow \infty$. As a result, $\lim_{T \to \infty} Q^Q_T = 0$. This 
shows that the infinite horizon case can indeed be viewed as a limiting case when $T \uparrow \infty$. 
\end{remark}


\subsection{Solution to the MFG } \label{sec-solution-to-MFG}

In this section we prove the existence of a solution to the mean field game. In terms of the optimal controls $q^Q$ we are looking for a fixed point of the mapping \eqref{fxpt}. We consider the infinite and the finite horizon case separately. The main difference is that full exploitation occurs in equilibrium if $T=+\infty$, but may not occur if $T < \infty$. It turns out that the infinite horizon case can be solved in closed form. In view of Corollary \ref{cor-Q} , we have the following characterization.
\begin{corollary}\label{cor-compatibility}
Any fixed point $Q^*=Q^{Q^*}$ satisfies the dynamics   
\begin{equation*}
	Q^*-\frac{1}{r}\dot Q^*=\frac{S_\mu\circ \xi^{Q^*}}{2+\varepsilon S_\mu\circ \xi^{Q^*}}, \label{fxdptODE}
\end{equation*}
and satisfies the compatibility condition \eqref{simple-cc}. 
\end{corollary}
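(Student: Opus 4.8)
The plan is to obtain the asserted dynamics by specializing the aggregate ODE \eqref{terminal-q} of Corollary \ref{cor-Q} to the fixed point and then solving the resulting \emph{pointwise} algebraic relation for $Q^*-\frac1r\dot Q^*$. The crucial structural observation is that the linear operator $L[Q]:=Q-\frac1r\dot Q$ appears on both sides of \eqref{terminal-q}: the left-hand side is exactly $L[Q^Q]$, while the bracket on the right can be rewritten as
\[
	\frac12\left(1-\varepsilon Q+\frac{\varepsilon}{r}\dot Q\right)=\frac12\left(1-\varepsilon\,L[Q]\right).
\]
At a fixed point we have $Q^Q=Q=Q^*$, so both occurrences of $L$ act on the same function and the relation closes on itself.

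First I would substitute $Q^Q=Q=Q^*$ into \eqref{terminal-q}. Writing $S:=S_\mu\circ\xi^{Q^*}$ for brevity, this yields, pointwise in $t\in[0,T]$,
\[
	L[Q^*]=\frac12\left(1-\varepsilon\,L[Q^*]\right)S.
\]
Since at each fixed time $t$ this is a scalar \emph{linear} equation in the single unknown $L[Q^*](t)$, I would solve it directly: collecting the $L[Q^*]$ terms gives $L[Q^*]\,(2+\varepsilon S)=S$, hence
\[
	Q^*-\frac1r\dot Q^*=L[Q^*]=\frac{S_\mu\circ\xi^{Q^*}}{2+\varepsilon\,S_\mu\circ\xi^{Q^*}},
\]
which is precisely the claimed dynamics. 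No division by zero occurs, since $S\geq0$ forces $2+\varepsilon S\geq2$.

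For the compatibility claim I would feed this value of $L[Q^*]$ back into the quantity appearing in \eqref{simple-cc}. Using $1-\varepsilon Q^*+\frac{\varepsilon}{r}\dot Q^*=1-\varepsilon\,L[Q^*]$ together with the formula just derived,
\[
	1-\varepsilon Q^*+\frac{\varepsilon}{r}\dot Q^*=1-\frac{\varepsilon\,S}{2+\varepsilon S}=\frac{2}{2+\varepsilon S}.
\]
Because $S=S_\mu\circ\xi^{Q^*}$ is a survival function it takes values in $[0,1]$, so the denominator lies in $[2,2+\varepsilon]$ and the whole expression is bounded below by $\frac{2}{2+\varepsilon}>0$. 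Thus \eqref{simple-cc} holds with the explicit constant $\delta=\frac{2}{2+\varepsilon}$, uniform in $\mu$ and $r$.

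The main point to watch is logical rather than analytic: the fixed-point identity $Q^*=Q^{Q^*}$ only makes sense once $Q^*$ satisfies \eqref{simple-cc}, since that is exactly what Corollary \ref{cor-Q} needs in order to define $\xi^{Q^*}$, the exhaustion time $\tau^{Q^*}$, and hence the response $Q^{Q^*}$. Accordingly the compatibility statement should be read as a self-consistency conclusion: among compatible candidates, every fixed point in fact realizes the uniform lower bound $\delta=2/(2+\varepsilon)$. I would also note that the entire derivation is pointwise in $t$, so it is valid on the whole horizon $[0,T]$ wherever \eqref{terminal-q} holds, including the limit $t\to\infty$ in the infinite-horizon case, where $S\to0$ and both sides of the fixed-point ODE vanish consistently.
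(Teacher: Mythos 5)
Your proposal is correct and follows essentially the same route as the paper: substitute $Q^Q=Q=Q^*$ into \eqref{terminal-q}, rearrange the resulting pointwise linear relation to get the fixed-point ODE, and then read off $1-\varepsilon Q^*+\frac{\varepsilon}{r}\dot Q^*=\frac{2}{2+\varepsilon S_\mu\circ\xi^{Q^*}}\geq\frac{2}{2+\varepsilon}$, which is exactly the paper's \eqref{key-observation}. Your version merely spells out the algebra the paper leaves to ``re-arranging'' and makes the constant $\delta=2/(2+\varepsilon)$ and the self-consistency reading explicit, both of which are sound.
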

\begin{proof}
Setting $Q^Q$ and $Q$ in \eqref{terminal-q} to $Q^*$ and re-arranging leads to 
\eqref{fxdptODE}, 
which is equivalent to 
\begin{equation} \label{key-observation}
1-\varepsilon Q^*+\frac{\varepsilon}{r}\dot Q^*=\frac{2}{2+\varepsilon S_\mu\circ \xi^{Q^*}}\,,
\end{equation}
from which compatibility of $Q^*$ easily follows.
\end{proof}

For any fixed point $Q^*$ the equations \eqref{derivative-xiQ}  and \eqref{key-observation} yield the following ODE for $\xi^{Q^*}$: 
\begin{equation}\label{ODE-xiQ}
	\displaystyle\dot\xi^{Q^*}(t) = \frac{1-e^{-rt}}{2+\varepsilon S_\mu\circ \xi^{Q^*}}, \quad \xi^{Q^*}_t(0) = 0. 
\end{equation}
The key observation is that \eqref{ODE-xiQ} does not depend on $Q^*$. We prove below that the (unique) solution to the MFG can be defined in terms 
of the unique solution $\xi^*$ to the above ODE. 

\begin{proposition}
If the initial distribution has a bounded density or is concentrated on finitely many points, then the preceding ODE has a unique global solution. 
The solution can be expressed in terms of the functions $\phi,\psi:[0,\infty)\to[0,\infty)$ defined by
\[
	\phi(t):=\frac{rt-1+e^{-rt}}{r} \quad \mbox{and} \quad \psi(x):=2x+\varepsilon\int_0^xS_\mu(\xi)\,d\xi,
\]
respectively as
\begin{equation}
	\xi^*:=\psi^{-1}\circ\phi
\end{equation}
\end{proposition}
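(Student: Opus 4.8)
The plan is to recognize \eqref{ODE-xiQ} as a separable (exact) equation and to integrate it in closed form. Writing the ODE as $(2+\varepsilon S_\mu(\xi))\,\dot\xi(t)=1-e^{-rt}$, I observe that the left-hand side is precisely $\frac{d}{dt}\psi(\xi(t))$, since $\psi'(x)=2+\varepsilon S_\mu(x)$, while the right-hand side is $\phi'(t)$, since $\phi'(t)=1-e^{-rt}$. Integrating from $0$ to $t$ and using $\xi(0)=0$ together with $\psi(0)=0=\phi(0)$ yields the algebraic identity $\psi(\xi(t))=\phi(t)$, from which $\xi=\psi^{-1}\circ\phi$ follows once $\psi$ is shown to be invertible. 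The whole proposition is then a matter of justifying this differentiation and establishing the regularity of $\psi$ and $\phi$.

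First I would record the relevant monotonicity and regularity. Since $S_\mu$ takes values in $[0,1]$, we have $\psi'\in[2,2+\varepsilon]$ (at points of continuity of $S_\mu$), so $\psi$ is a strictly increasing, Lipschitz bijection of $[0,\infty)$ onto itself (using $\psi(x)\geq 2x\to\infty$), with inverse $\psi^{-1}$ Lipschitz of constant $1/2$. Likewise $\phi$ is $C^1$, strictly increasing on $[0,\infty)$ (as $\phi'(t)=1-e^{-rt}>0$ for $t>0$), with $\phi(0)=0$ and $\phi(\infty)=\infty$. Consequently $\xi^*:=\psi^{-1}\circ\phi$ is well defined and finite on all of $[0,\infty)$, is absolutely continuous (a Lipschitz function composed with a $C^1$ function), strictly increasing, and satisfies $\xi^*(0)=0$; this already produces the global solution candidate.

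To verify that $\xi^*$ solves \eqref{ODE-xiQ} and that it is the \emph{unique} solution, I would differentiate the identity $\psi(\xi^*(t))=\phi(t)$. At every $t$ for which $S_\mu$ is continuous at $\xi^*(t)$, the classical chain rule gives $(2+\varepsilon S_\mu(\xi^*(t)))\,\dot\xi^*(t)=1-e^{-rt}$, i.e.\ exactly the ODE. For uniqueness, given any absolutely continuous solution $\xi$, strict positivity of the right-hand side of \eqref{ODE-xiQ} for $t>0$ forces $\dot\xi>0$ a.e., so $\xi$ is strictly increasing; the same differentiation then yields $\frac{d}{dt}\psi(\xi(t))=\phi'(t)$ for a.e.\ $t$, whence $\psi(\xi)=\phi$ and therefore $\xi=\xi^*$.

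The one point requiring care — and the main obstacle — is the validity of the chain rule at the discontinuities of $S_\mu$, i.e.\ in the atomic case where $\psi'$ jumps at the finitely many atoms. Here I would use that a strictly increasing continuous function attains each value at most once, so the set $\{t:\xi^*(t)\text{ equals an atom}\}$ is finite and hence of measure zero; equivalently, for an absolutely continuous $\xi$ the derivative $\dot\xi$ vanishes a.e.\ on the preimage of any Lebesgue-null set, so the exceptional set never contributes to the a.e.\ identity. Thus the chain-rule computation above holds for a.e.\ $t$ in both admissible regimes — the bounded-density case (where $S_\mu$ is in fact Lipschitz and $\psi\in C^1$, so the issue does not even arise) and the finitely-supported case — which completes existence, uniqueness, and the closed-form representation $\xi^*=\psi^{-1}\circ\phi$.
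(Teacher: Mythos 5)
Your proof is correct, but it takes a genuinely different route from the paper's. The paper obtains existence and uniqueness from standard ODE theory, split into the two admissible cases: for a bounded density it notes that $x\mapsto\frac{1-e^{-rt}}{2+\varepsilon S_\mu(x)}$ is Lipschitz continuous (so Cauchy--Lipschitz applies), while for finitely many atoms the right-hand side is locally independent of $x$ and the equation can be solved iteratively, piece by piece between atoms; only afterwards does it verify by a direct computation that $\xi^*=\psi^{-1}\circ\phi$ solves \eqref{ODE-xiQ}. You instead exploit the exactness of the equation: rewriting it as $\frac{d}{dt}\psi(\xi(t))=\phi'(t)$ and integrating gives the first integral $\psi(\xi)=\phi$, which simultaneously produces the closed-form solution and shows that \emph{any} absolutely continuous solution must coincide with it. This buys a unified treatment of both classes of $\mu$ without invoking Picard--Lindel\"of, makes the uniqueness in the atomic case explicit (the paper's ``solved iteratively'' is left informal), and in fact extends to arbitrary initial distributions, since the discontinuity set of $S_\mu$ is at most countable and your strict-monotonicity/null-preimage argument still applies. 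The price is the measure-theoretic care you correctly take with the chain rule for the composition of the Lipschitz function $\psi$ with an absolutely continuous $\xi$ at atoms of $\mu$ --- precisely the point where a naive separation-of-variables argument would be incomplete, and which the paper sidesteps by delegating well-posedness to standard results before verifying the candidate.
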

\begin{proof}
If $\mu$ has a bounded density, then the mapping $x \mapsto \frac{1-e^{-rt}}{2+\varepsilon S_\mu(x)}$ is Lipschitz continuous. If $\mu$ is concentrated on finitely many points, then this mapping is locally independent of $x$ and hence can be solved iteratively. The functions $\phi$ and $\psi$ are well-defined, strictly increasing, surjective, and satisfy
\[
	\dot\phi(t)=1-e^{-rt} \quad \mbox{and} \quad \psi'(x)=2+\varepsilon S_\mu(x).
\]
Now, a direct computation verifies that $\xi^*$ satisfies the desired ODE.
\end{proof}
Note that the inverse of $\phi$ is, in terms of the principle branch of the Lambert-$\W$ function,
\[
	\phi^{-1}(x)=x+\frac{1+\W(-e^{-1-rx})}{r}.
\]


\subsubsection{Infinite horizon}

We are now ready to solve the MFG in the infinite horizon case. In terms of the functions $\phi$ and $\psi$, and their inverses we obtain 
\begin{itemize}
	\item $\xi^*:=\psi^{-1}\circ\phi$
	\item $\tau^*:={\xi^*}^{-1}=\phi^{-1}\circ\psi$
\end{itemize}
In view of \eqref{key-observation} we can express the unique solution to the ODE \eqref{ODE-q} with terminal condition \eqref{terminal-q} fully in 
terms of these functions as  
\[
	\displaystyle q_t^*(x_0):=\int_{t\wedge\tau^*(x_0)}^{\tau^*(x_0)}\frac{re^{-r(s-t)}}{2+\varepsilon S_\mu(\xi^*(s))}\,ds.
\]

The following theorem verifies that this production rate does indeed solve the MFG in the infinite horizon case. 

\begin{theorem}
The aggregate production rate 
\[
	\displaystyle Q_t^*:=\int_0^\infty q_t^*(x_0)\,d\mu(x_0)
\]
is absolutely continuous, takes values in $[0,\frac{1}{2+\varepsilon}]$ and satisfies the compatibility condition \eqref{simple-cc} as well as the 
fixed point property $Q^{Q^*}=Q^*$. 
Hence, $Q^*$ is the unique solution to the MFG (that is absolutely continuous and satisfies the compatibility condition).
\end{theorem}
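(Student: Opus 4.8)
The plan is to reduce the entire statement to a single identity, namely the fixed-point ODE
\[
Q^*_t - \frac{1}{r}\dot Q^*_t = \frac{S_\mu(\xi^*(t))}{2 + \varepsilon S_\mu(\xi^*(t))},
\]
which is exactly \eqref{key-observation} with the implicitly-defined $\xi^{Q^*}$ replaced by the directly-constructed $\xi^*$. Because $\xi^*$ solves \eqref{ODE-xiQ} \emph{without reference to} $Q^*$ (the preceding proposition), the functions $q^*$ and $Q^*$ are fully explicit, and every assertion of the theorem will follow once this identity is in hand: it simultaneously yields the compatibility bound, the self-consistency $\xi^{Q^*}=\xi^*$, and the fixed-point property. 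So the whole argument is organized around establishing it.

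First I would put $Q^*$ into closed form by interchanging the order of integration. Inserting the integral definition of $q_t^*(x_0)$ and noting that $q_t^*(x_0)=0$ whenever $x_0<\xi^*(t)$ (since then $\tau^*(x_0)=(\xi^*)^{-1}(x_0)<t$), Tonelli's theorem applied to the nonnegative integrand, with the inner $\mu$-integral taken over $\{x_0\ge\xi^*(s)\}$, gives
\[
Q_t^* = \int_t^\infty \frac{r e^{-r(s-t)}\,S_\mu(\xi^*(s))}{2+\varepsilon S_\mu(\xi^*(s))}\,ds.
\]
From this representation the bounds are immediate: the integrand is nonnegative, and since $0\le S_\mu\le1$ makes $u\mapsto u/(2+\varepsilon u)$ maximal at $u=1$, it is dominated by $\frac{1}{2+\varepsilon}r e^{-r(s-t)}$, whose integral over $[t,\infty)$ equals $\frac{1}{2+\varepsilon}$; hence $0\le Q^*_t\le\frac{1}{2+\varepsilon}$. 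Writing $Q^*_t=e^{rt}\int_t^\infty r e^{-rs}h(s)\,ds$ with $h(s):=S_\mu(\xi^*(s))/(2+\varepsilon S_\mu(\xi^*(s)))$ and differentiating gives $\dot Q^*_t=rQ^*_t-rh(t)$, so $\dot Q^*$ is bounded, $Q^*$ is Lipschitz (hence absolutely continuous), and rearranging yields precisely the fixed-point ODE displayed above. Feeding this into $1-\varepsilon Q^*_t+\frac{\varepsilon}{r}\dot Q^*_t=\frac{2}{2+\varepsilon S_\mu(\xi^*(t))}\ge\frac{2}{2+\varepsilon}$ verifies the compatibility condition \eqref{simple-cc} with $\delta=\frac{2}{2+\varepsilon}$.

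With the identity secured I would close the fixed point. Substituting $1-\varepsilon Q^*_t+\frac{\varepsilon}{r}\dot Q^*_t=\frac{2}{2+\varepsilon S_\mu(\xi^*(t))}$ into \eqref{derivative-xiQ} gives $\dot\xi^{Q^*}(t)=\frac{1-e^{-rt}}{2+\varepsilon S_\mu(\xi^*(t))}=\dot\xi^*(t)$, and since both functions vanish at $t=0$ we obtain $\xi^{Q^*}=\xi^*$ and hence $\tau^{Q^*}=\tau^*$. As $Q^*$ satisfies the compatibility condition and lies in $[0,\frac{1}{2+\varepsilon}]$, the verification theorem (Theorem \ref{verification}) applies; integrating \eqref{ODE-q} with the terminal condition $q^{Q^*}_{\tau^{Q^*}}=0$ and using \eqref{key-observation} identifies the optimal response as $q^{Q^*}_t(x_0)=\int_t^{\tau^*(x_0)}\frac{r e^{-r(s-t)}}{2+\varepsilon S_\mu(\xi^*(s))}\,ds=q^*_t(x_0)$. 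Integrating against $\mu$ over $(\xi^*(t),\infty)$ as in \eqref{fxpt}, and using once more that $q^*_t(x_0)=0$ for $x_0<\xi^*(t)$, gives $Q^{Q^*}=Q^*$. For uniqueness, any absolutely continuous fixed point $\tilde Q$ satisfies \eqref{key-observation} by Corollary \ref{cor-compatibility}, so by \eqref{derivative-xiQ} its associated $\xi^{\tilde Q}$ solves \eqref{ODE-xiQ}, whose solution is unique (preceding proposition), forcing $\xi^{\tilde Q}=\xi^*$; then $\tilde Q$ solves the same linear equation $\dot{\tilde Q}=r\tilde Q-rh$, with general solution $Q^*_t+Ce^{rt}$, and the constraint $\tilde Q_t\in[0,\frac{1}{2+\varepsilon}]$ rules out $C\neq0$ (the term $Ce^{rt}$ would eventually violate the upper bound if $C>0$ and nonnegativity if $C<0$), whence $\tilde Q=Q^*$. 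I expect the only genuinely delicate point to be the Tonelli step together with the bookkeeping of the domains $\{x_0\ge\xi^*(s)\}$; once the closed form for $Q^*$ is obtained, everything else is a direct consequence of \eqref{key-observation}.
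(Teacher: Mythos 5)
Your proof is correct and follows essentially the same route as the paper: the Fubini/Tonelli representation of $Q^*$ giving the bounds $[0,\frac{1}{2+\varepsilon}]$ and absolute continuity, the resulting linear ODE yielding the compatibility condition (with $\delta=\frac{2}{2+\varepsilon}$), the identification $\xi^{Q^*}=\xi^*$ via \eqref{derivative-xiQ} and \eqref{ODE-xiQ}, and hence $q^{Q^*}=q^*$ and the fixed point property. Your uniqueness step (any compatible fixed point solves the same linear ODE, whose general solution $Q^*_t+Ce^{rt}$ is forced to have $C=0$ by the a priori bounds) is a legitimate and slightly more explicit rendering of what the paper compresses into its appeal to Corollary \ref{cor-compatibility}, the uniqueness of solutions to \eqref{ODE-xiQ}, and \cref{verification}.
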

\begin{proof} 
	To verify $Q^*\in[0,\frac{1}{2+\varepsilon}]$ we use the following representation by Fubini: 
\begin{equation} \label{fubini-representation} 
	\begin{aligned}
	Q_t^*&=\int_t^\infty \frac{r S_\mu(\xi^*(s))e^{-r(s-t)}}{2+\varepsilon 
S_\mu(\xi^*(s))}\,ds
	\leq \frac{S_\mu(\xi^*(t))}{2+\varepsilon S_\mu(\xi^*(t))}
	\leq \frac{1}{2+\varepsilon}.
	\end{aligned}
\end{equation}
Next, by construction $\xi^*$ satisfies the ODE 
\begin{equation} \label{eq-diff-xi}
	\dot \xi^*(t)=\frac{1-e^{-rt}}{2+\varepsilon S_\mu(\xi^*(t))}, \quad \xi^*(0) = 0
\end{equation}
and $q^*$ is absolutely continuous in the time variable with
\begin{equation} \label{eq-diff-q}
	q^*_t(x_0)-\frac{1}{r}\dot q_t^*(x_0)=\frac{1}{2+\varepsilon S_\mu(\xi^*(t))}\indicator_{\{t<\tau^*(x_0)\}}.
\end{equation}
The latter yields that $Q^*$ is absolutely continuous with
\begin{equation} \label{strict-cc}
	Q_t^*-\frac{1}{r}\dot Q_t^*=\frac{S_\mu(\xi^*(t))}{2+\varepsilon S_\mu(\xi^*(t))},
\end{equation}
which is equivalent to 
\[
	\frac{1}{2}\left\{1-\varepsilon Q_t^*+\frac{\varepsilon}{r}\dot Q_t^*\right\}=\frac{1}{2+\varepsilon S_\mu(\xi^*(t))}.
\]
As a result, $Q^*$ satisfies the compatibility condition. Moreover, in view of \eqref{derivative-xiQ} and \eqref{eq-diff-xi}, 
\[
	\xi^* = \xi^{Q^*} \quad \mbox{and hence} \quad \tau^* = \tau^{Q^*}. 
\]
>From this we conclude that $q^{Q^*}=q^*$ and hence the fixed point property $Q^{Q^*}=Q^*$. Applying the verification result \cref{verification} finally verifies that $Q^*$ is the unique solution the MFG.
\end{proof}

While the individual production rate $q^*$ may not be monotone in general 
as illustrated below, it is a direct consequence of \eqref{fubini-representation}  and \eqref{strict-cc} that the aggregate production rate is nonincreasing. 
\begin{corollary}
The aggregate production rate $Q^*$ is nonincreasing.
\end{corollary}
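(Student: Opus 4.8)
The plan is to observe that the statement reduces to a sign computation on $\dot Q^*$ that combines exactly the two cited equations. Since the preceding theorem already establishes that $Q^*$ is absolutely continuous, it suffices to show $\dot Q_t^* \leq 0$ for almost every $t$; the monotonicity then follows from the fundamental theorem of calculus for absolutely continuous functions, which lets us pass from an almost-everywhere sign condition on the derivative to monotonicity of $Q^*$ itself.

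First I would rearrange the differential identity \eqref{strict-cc} to isolate the derivative, writing
\[
	\frac{1}{r}\dot Q_t^* = Q_t^* - \frac{S_\mu(\xi^*(t))}{2+\varepsilon S_\mu(\xi^*(t))}.
\]
Next I would invoke the first inequality in the Fubini estimate \eqref{fubini-representation}, namely $Q_t^* \leq \frac{S_\mu(\xi^*(t))}{2+\varepsilon S_\mu(\xi^*(t))}$, which holds for every $t$. Substituting this bound into the right-hand side of the displayed identity immediately gives $\tfrac{1}{r}\dot Q_t^* \leq 0$, and since $r>0$ we conclude $\dot Q_t^* \leq 0$ almost everywhere.

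The argument is therefore a two-line deduction rather than a genuine construction, so there is no real obstacle to overcome; the only point requiring a word of care is the regularity bookkeeping. One should note that \eqref{strict-cc} is an almost-everywhere identity (the derivative of the absolutely continuous function $Q^*$ exists a.e.), whereas the bound $Q_t^* \leq \frac{S_\mu(\xi^*(t))}{2+\varepsilon S_\mu(\xi^*(t))}$ is a genuinely pointwise inequality; intersecting the full-measure set where the derivative exists with the everywhere-valid bound yields the conclusion a.e., which is all that is needed. I would close by remarking that this monotonicity is the analytic counterpart of the economic intuition stated earlier in the paper, that production slows as resources are depleted, and that it is consistent with the compatibility condition \eqref{simple-cc} being a lower bound on precisely how fast $Q^*$ may decrease.
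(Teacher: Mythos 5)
Your proof is correct and is exactly the argument the paper intends: the sentence preceding the corollary states that monotonicity is a direct consequence of \eqref{fubini-representation} and \eqref{strict-cc}, and you have simply spelled out that combination, rewriting \eqref{strict-cc} as $\tfrac{1}{r}\dot Q_t^* = Q_t^* - \frac{S_\mu(\xi^*(t))}{2+\varepsilon S_\mu(\xi^*(t))}$ and inserting the Fubini bound to get $\dot Q_t^*\leq 0$ a.e. Your additional care about the almost-everywhere nature of the derivative and the passage to monotonicity via absolute continuity is a correct and welcome refinement of the paper's terse justification.
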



\subsubsection{Finite horizon}

In the finite horizon case an additional challenge emerges. While we can still solve the ODE \eqref{ODE-q} using \eqref{key-observation}, due to the dependence of the second and third terminal conditions in \eqref{terminal-q1} on $Q_T$ we obtain the equilibrium production rate only up to its 
terminal value.\footnote{In the infinite horizon game both the dynamics and the terminal conditions were defined in terms of $\xi^*$ and its inverse $\tau^*$. This is no longer the case in the finite horizon game.}  An additional fixed point argument on the terminal value $Q^*_T$ is required 
to solve the game. In terms of the yet to be determined terminal condition $Q^*_T$ we do know that  
\[
	Q^*_t=Q_T^*e^{-r(T-t)}+\int_t^T \frac{r S_\mu(\xi^*(s))e^{-r(s-t)}}{2+\varepsilon S_\mu(\xi^*(s))}\,ds.
\]
In terms of
\begin{equation} \label{bpsi}
	\psi^Q(x_0) := \frac{1}{\beta_T}\left( x_0 - \xi^Q(T) \right) 
\end{equation}
we obtain for any admissible process $Q$,
\begin{align*}
	Q_T^Q&=\frac{1}{2}\int_{\xi^Q(T)}^{\xi^Q(T)+\beta_T(1-\varepsilon Q_T)}\psi^Q(x_0)\,d\mu(x_0)+\frac{1}{2}\int^\infty_{\xi^Q(T)+\beta_T(1-\varepsilon Q_T)}(1-\varepsilon Q_T^Q)	\,d\mu(x_0)\\
	&=\frac{1}{2}\int_{\xi^Q(T)}^\infty \{\psi^Q(x_0)\wedge(1-\varepsilon Q_T^Q)\}\,d\mu(x_0).
\end{align*}
Putting 
\[
	\psi^*(x_0) := \frac{1}{\beta_T}\left( x_0 - \xi^*(T) \right) 
\]
we obtain for any aggregate equilibrium production rate function $Q^*$ that
\begin{align*}
	Q_T^*=\frac{1}{2}\int_{\xi^*(T)}^\infty \{\psi^*(x_0)\wedge(1-\varepsilon Q_T^*)\}\,d\mu(x_0).
\end{align*}
This means that $Q_T^*$ is a fixed point of the map $\Gamma:[0,\frac{1}{2}]\to[0,\frac{1}{2}]$ definied by
\[
\Gamma(\mathcal Q)=\frac{1}{2}\int_{\xi^*(T)}^\infty\{(\psi^*(x_0)\wedge\{1-\varepsilon \mathcal Q\})\vee0\}\,d\mu(x_0),
\]
where we added the nonnegative cut-off (being redundant for any fixed point) to obtain a map on $[0,\frac{1}{2}]$. Moreover, is $\Gamma$ nonincreasing and continuous. Hence, it admits a unique fixed point $\mathcal Q^*=\Gamma(\mathcal Q^*)$. Since
\[
\Gamma(\mathcal Q^*)\leq \frac{1}{2} S_\mu(\xi^*(T))(1-\varepsilon \mathcal Q^*),
\]
it follows that
\[
\mathcal Q^*\leq \frac{S_\mu(\xi^*(T))}{2+\varepsilon S_\mu(\xi^*(T))}.
\]
This is important to verify that the induced $Q^*$ takes indeed values in 
$[0,\frac{1}{2+\varepsilon}]$. For $t\in[0,T]$,
\begin{align*}
	Q^*_t&=\mathcal Q^*e^{-r(T-t)}+\int_t^T \frac{r S_\mu(\xi^*(s))e^{-r(s-t)}}{2+\varepsilon S_\mu(\xi^*(s))}\,ds\\
	&\leq\frac{S_\mu(\xi^*(t))}{2+\varepsilon S_\mu(\xi^*(t))}e^{-r(T-t)}+\frac{S_\mu(\xi^*(t))}{2+\varepsilon S_\mu(\xi^*(t))}\int_t^T re^{-r(s-t)}\,ds\\
	&=\frac{S_\mu(\xi^*(t))}{2+\varepsilon S_\mu(\xi^*(t))}.
\end{align*}
The fact that $Q^*$ satisfies the compatibility condition has already been established in Corollary \ref{cor-compatibility}. Let us summarize.
\begin{theorem}
In terms of the above definitions, the function $Q^*$ is absolutely continuous, takes values in $[0,\frac{1}{2+\varepsilon}]$, and satisfies the compatibility
condition \eqref{simple-cc} and the fixed point property 
\[
	Q^{Q^*}=Q^*.
\]
Hence, $Q^*$ is the unique solution to the finite-time MFG (that is absolutely continuous and satisfies the compatibility condition).
\end{theorem}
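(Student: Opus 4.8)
The plan is to verify the four asserted properties of the constructed $Q^*$ in turn---absolute continuity, the value range, the compatibility condition, and the fixed point property $Q^{Q^*}=Q^*$---and then to read off uniqueness from the verification theorem. The first two are essentially already in hand: the explicit representation $Q^*_t=\mathcal Q^*e^{-r(T-t)}+\int_t^T rS_\mu(\xi^*(s))e^{-r(s-t)}/(2+\varepsilon S_\mu(\xi^*(s)))\,ds$ has a bounded integrand, so $Q^*$ is Lipschitz and hence absolutely continuous; and the chain of estimates displayed just before the statement gives $0\leq Q^*_t\leq S_\mu(\xi^*(t))/(2+\varepsilon S_\mu(\xi^*(t)))\leq \tfrac{1}{2+\varepsilon}$, using $\mathcal Q^*\geq0$ and the nonnegativity of the integrand for the lower bound.

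For compatibility I would differentiate the explicit formula directly rather than invoke \cref{cor-compatibility} (which presupposes the fixed point property and would be circular at this stage). Writing $Q^*_t=\mathcal Q^*e^{-r(T-t)}+e^{rt}\int_t^T g(s)e^{-rs}\,ds$ with $g(s):=rS_\mu(\xi^*(s))/(2+\varepsilon S_\mu(\xi^*(s)))$ and differentiating yields $\dot Q^*_t=rQ^*_t-g(t)$, that is, exactly \eqref{strict-cc}. Rearranging gives $1-\varepsilon Q^*_t+\tfrac\varepsilon r\dot Q^*_t=2/(2+\varepsilon S_\mu(\xi^*(t)))\geq 2/(2+\varepsilon)>0$, so the compatibility condition \eqref{simple-cc} holds with $\delta=2/(2+\varepsilon)$.

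The main work---and the step I expect to be the genuine obstacle---is the fixed point identity $Q^{Q^*}=Q^*$. I would first show $\xi^{Q^*}=\xi^*$: by \eqref{derivative-xiQ} together with the compatibility identity just obtained, $\dot\xi^{Q^*}(t)=(1-e^{-rt})/(2+\varepsilon S_\mu(\xi^*(t)))$ with $\xi^{Q^*}(0)=0$. Crucially, the right-hand side is the function of $t$ already appearing in the ODE \eqref{eq-diff-xi} for $\xi^*$, so both functions equal the same integral and $\xi^{Q^*}=\xi^*$, whence $\tau^{Q^*}=\tau^*$. By \cref{cor-Q} the aggregate best response $Q^{Q^*}$ then satisfies the linear ODE \eqref{terminal-q} with $Q=Q^*$, which by compatibility reduces to $Q^{Q^*}_t-\tfrac1r\dot Q^{Q^*}_t=S_\mu(\xi^*(t))/(2+\varepsilon S_\mu(\xi^*(t)))$---the very ODE \eqref{strict-cc} solved by $Q^*$. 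Two solutions of this first-order linear equation on $[0,T]$ coincide if and only if they agree at $T$, so it remains to match terminal values. Here I would invoke the self-consistency relation derived just before the statement: applied to $Q=Q^*$ and using $\xi^{Q^*}(T)=\xi^*(T)$, it reads $Q^{Q^*}_T=\Gamma(Q^{Q^*}_T)$, so $Q^{Q^*}_T$ is a fixed point of $\Gamma$. Since $\Gamma$ is nonincreasing and continuous with unique fixed point $\mathcal Q^*=Q^*_T$, we get $Q^{Q^*}_T=Q^*_T$ and therefore $Q^{Q^*}=Q^*$.

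Finally, uniqueness follows by running this argument in reverse. Any absolutely continuous, compatible equilibrium $\tilde Q$ is a fixed point $Q^{\tilde Q}=\tilde Q$, so by \cref{cor-compatibility} it obeys \eqref{fxdptODE}; consequently $\xi^{\tilde Q}$ solves the $Q$-independent ODE \eqref{eq-diff-xi} and equals $\xi^*$, $\tilde Q$ solves the same linear ODE as $Q^*$, and its terminal value is a fixed point of $\Gamma$, forcing $\tilde Q_T=\mathcal Q^*$ and hence $\tilde Q=Q^*$. The optimality of the induced controls, and thus that $Q^*$ genuinely constitutes an equilibrium, is guaranteed by \cref{verification}. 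The one point requiring care throughout is keeping the logic non-circular: compatibility and the identity $\xi^{Q^*}=\xi^*$ must be established from the explicit formula for $Q^*$ before any appeal to results phrased for fixed points.
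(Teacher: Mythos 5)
Your proof is correct and follows essentially the same route as the paper, whose own ``proof'' is the construction preceding the theorem statement: the explicit formula for $Q^*$ in terms of $\mathcal Q^*$ gives absolute continuity and the bounds, the induced ODE gives compatibility, and the fixed point property reduces to $\xi^{Q^*}=\xi^*$ together with matching terminal values via the unique fixed point of $\Gamma$, with uniqueness obtained by running the argument in reverse and \cref{verification} supplying optimality. Your two refinements---deriving compatibility directly from the explicit formula instead of citing \cref{cor-compatibility} (thereby avoiding the circularity you correctly flag, since that corollary is stated for fixed points), and spelling out the $\xi^{Q^*}=\xi^*$ and terminal-value-matching steps that the paper leaves implicit---are faithful completions of the paper's argument rather than a genuinely different approach.
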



\subsection{Examples}

We close this section with two examples with infinite time horizon that illustrate that equilibrium production rates need not be monotone, nor smooth. 

\subsubsection{Discrete initial distribution}

Let us assume that the population of producers splits into two groups within which producers are identical. Producers in Group 1 have an initial resource $x_1$; producers in Group 2 have an initial resource $x_2$. That is, the initial distribution is given by $$\mu=(1-p_2)\delta_{x_1}+p_2\delta_{x_2}$$ for $0\leq  x_1\leq x_2<\infty$ and $0\leq p_2\leq 1$ where 
$p_2$ denotes the proportion of producers that belong to Group 2. We refer to a representative producer in Group $i$ as Player $i$ where $i=1,2$. The equilibrium production rates can be computed in closed form. The equilibrium production rate of Player 1 is given by
\begin{equation} \label{single-player}
	q_t^*(x_1)=\frac{1-e^{-r(\tau_1-t)}}{2+\varepsilon}\indicator_{\{t\leq\tau_1\}},
\end{equation}
where $\tau_1:=\phi^{-1}((2+\varepsilon)x_1)$, while the equilibrium production rate of Player 2 is given by
\[
	q_t^*(x_2)=\begin{cases}\dfrac{1-e^{-r(\tau_1-t)}}{2+\varepsilon}+\dfrac{e^{-r(\tau_1-t)}-e^{-r(\tau_2-t)}}{2+\varepsilon p_2}, & t\leq \tau_1,\\
	\dfrac{1-e^{-r(\tau_2-t)}}{2+\varepsilon p_2}\indicator_{\{t\leq \tau_2\}}, & t\geq \tau_1,\end{cases}
\]
where $\tau_2:=\phi^{-1}(\varepsilon (1-p_2)x_1+(2+\varepsilon p_2)x_2)$. We notice that the equilibrium rate of Player 2 is continuous, but not 
differentiable at time $\tau_1$. Figure \ref{figure-kink} illustrates the 
equilibrium production rates $q^*_t(x_i)$ and resources $X^*_t(x_i)$ of both players. While both Players initially produce at the same rate, Player 1 produces at a decreasing rate while Player 2 initially produces at an 
increasing rate, and then at a decreasing rate once Player 1 has run out of resources. Player 2 anticipates the fact Player 1 will eventually drop 
out of the market; her production rate reaches it peak at the time Player 
1's resources have been depleted. 

\begin{figure}[h]
	\centering
	\includegraphics[width=.45\textwidth]{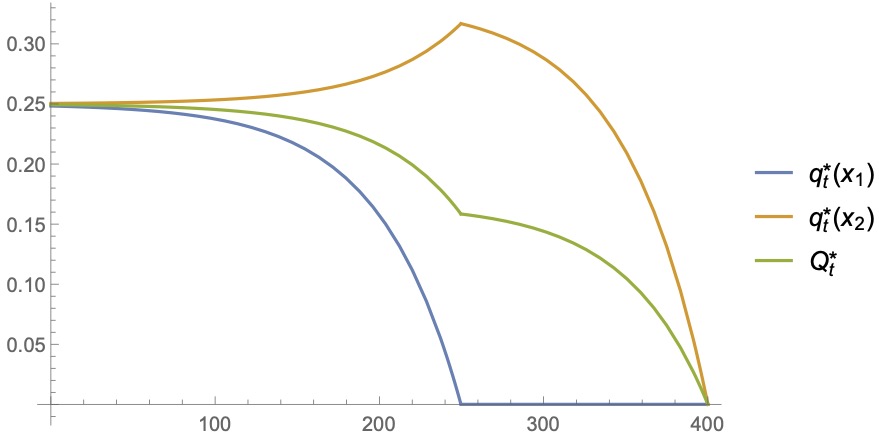}
\hfill	\includegraphics[width=.45\textwidth]{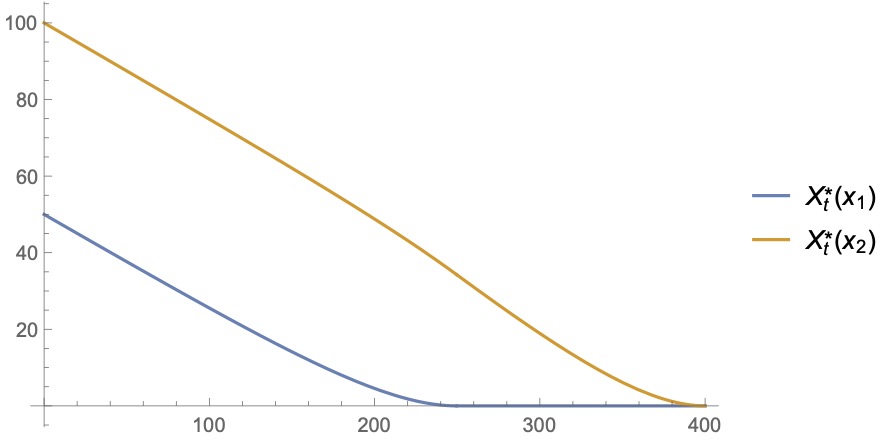}
	\caption{$r=.02$, $\varepsilon=2$, $x_1=50$, $x_2=100$, $p_2=.5$}
	\label{figure-kink}
\end{figure}


\subsubsection{Exponential initial distribution}

Figure \ref{figure-exponential} illustrates the equilibrium production rates $q^*_t(i)$ for players with initial resources $x_i = 1,2,3$ for the 
benchmark case of an exponential initial distribution. For absolutely continuous initial distributions players ``gradually'' drop out of the market so equilibrium production rates do not display kinks. However, we still 
observe non-monotonicity of production rates. Players with larger resources do again anticipate that players with lower reserves will eventually drop out of the game. For them it is optimal to produce at higher rates once the number of and the price pressure from their competitors decreases.

\begin{figure}[h]
	\centering
\includegraphics[width=.6\textwidth]{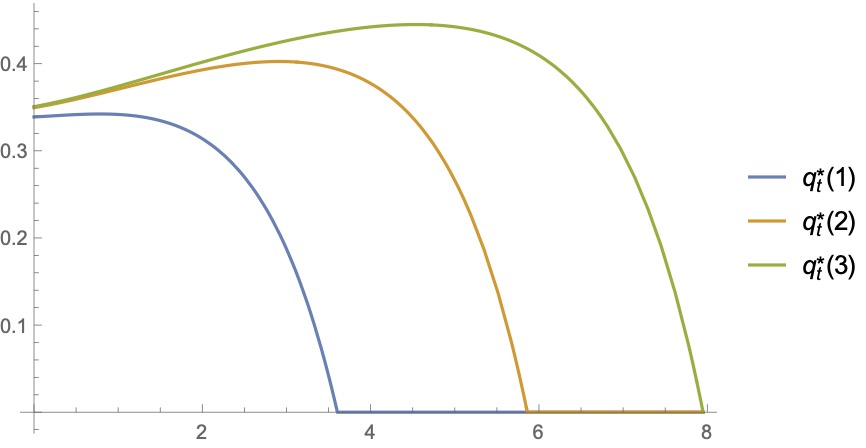}
	\caption{$\mu\sim \text{Exp}(\lambda)$, $r=\varepsilon=\lambda=1$ }
	\label{figure-exponential}
\end{figure}

	\bibliographystyle{plain}
	\bibliography{QQ1}

\end{document}